\newtheorem{theorem}{Theorem}
\newtheorem{proposition}[theorem]{Proposition}
\newtheorem{lemma}[theorem]{Lemma}
\newtheorem{definition}[theorem]{Definition}
\newtheorem{remark}[theorem]{Remark}
\newcommand{\beq}{\begin{eqnarray}}
\newcommand{\eeq}{\end{eqnarray}}
\newcommand{\n}{\nonumber}
\newcommand{\field}[1]{\mathbb{#1}}
\newcommand{\F}{\field{F}}
\newcommand{\cM}{{\cal M}}
\newcommand{\cC}{{\cal C}}
\newfont{\bbb}{msbm10 scaled 500}
\newfont{\bb}{msbm10 scaled 1100}
\newcommand{\FF}{\mbox{\bb F}}
\newcommand{\av}{{\bf a}}
\newcommand{\bv}{{\bf b}}
\newcommand{\cv}{{\bf c}}
\newcommand{\dv}{{\bf d}}
\newcommand{\ev}{{\bf e}}
\newcommand{\fv}{{\bf f}}
\newcommand{\mv}{{\bf m}}
\newcommand{\rv}{{\bf r}}
\newcommand{\sv}{{\bf s}}
\newcommand{\uv}{{\bf u}}
\newcommand{\vv}{{\bf v}}
\newcommand{\xv}{{\bf x}}
\newcommand{\yv}{{\bf y}}
\newcommand{\zv}{{\bf z}}
\newcommand{\Bm}{{\bf B}}
\newcommand{\Gm}{{\bf G}}
\newcommand{\Id}{{\bf I}}
\newcommand{\Mm}{{\bf M}}
\newcommand{\Vm}{{\bf V}}
\newcommand{\Cc}{{\cal C}}
\newcommand{\Dc}{{\cal D}}
\newcommand{\Ec}{{\cal E}}
\newcommand{\Gc}{{\cal G}}
\newcommand{\Kc}{{\cal K}}
\newcommand{\Mc}{{\cal M}}
\newcommand{\Nc}{{\cal N}}
\newcommand{\Pc}{{\cal P}}
\newcommand{\Rc}{{\cal R}}
\newcommand{\Xc}{{\cal X}}
\newcommand{\Yc}{{\cal Y}}
\newcommand{\Zc}{{\cal Z}}
\definecolor{OXO-emph}{RGB}{153,0,0}
\newcommand\floorb[1]{\left\lfloor #1 \right\rfloor}
\begin{document}

\title{Secure Cooperative Regenerating Codes for Distributed Storage Systems}

\author{O.~Ozan~Koyluoglu,~Ankit~Singh~Rawat,~and~Sriram~Vishwanath
\thanks{O.~O.~Koyluoglu is with the Department of Electrical and Computer Engineering, The University of Arizona, Tucson, AZ 85721, USA (e-mail: ozan@email.arizona.edu).}
\thanks{A.~S.~Rawat and S.~Vishwanath are with the Department of Electrical and Computer Engineering, The University of Texas at Austin, Austin, TX 78712, USA (e-mail: ankitsr@utexas.edu, sriram@austin.utexas.edu).}
}

\maketitle

\begin{abstract}
Regenerating codes enable trading off repair bandwidth for
storage in distributed storage systems (DSS). Due to their distributed nature,
these systems are intrinsically susceptible to attacks, and
they may also be subject to multiple simultaneous node failures. Cooperative regenerating codes allow bandwidth efficient repair of multiple simultaneous node failures. This paper analyzes storage systems 
that employ cooperative regenerating codes that are robust to (passive) eavesdroppers. 
The analysis is divided into two parts, studying both minimum bandwidth and  
minimum storage cooperative regenerating scenarios. First, the secrecy capacity 
for minimum bandwidth cooperative regenerating codes is characterized. 
Second, for minimum storage cooperative regenerating codes, a secure file size 
upper bound and achievability results are provided. These results establish the 
secrecy capacity for the minimum storage scenario for certain special cases. 
In all scenarios, the achievability results correspond to exact repair, 
and secure file size upper bounds are obtained using min-cut analyses 
over a suitable secrecy graph representation of DSS. The 
main achievability argument is based on an appropriate pre-coding of the 
data to eliminate the information leakage to the eavesdropper.
\end{abstract}


\section{Introduction}

\subsection{Background}

Distributed storage systems (DSS) are designed to store data over a distributed network of nodes. DSS have become increasingly important given the growing volumes of data being generated, analyzed, and archived. 
OceanStore \cite{Rhea:Pond03}, Google File System (GFS) \cite{Ghemawat:Google03}, and 
TotalRecall \cite{Bhagwan:Total04} are a few examples of storage systems employed today. Data to be stored is more than doubling every two years, and efficiency in storage and data recovery is particularly critical today. 
The coding schemes employed by DSS are designed to provide efficient storage while ensuring resilience against 
node failures in order to prevent the permanent loss of the data stored 
on the system. In a majority of existing literature, the analysis of DSS  focuses primarily on isolated node failures. In our work, we study a more general scenario of DSS that can suffer from multiple simultaneous node failures. In addition to multiple node failures, DSS systems are also inherently susceptible to adversarial attacks such as one from eavesdroppers aiming to gain access to the stored data. Therefore, it is desirable to have DSS that meet certain security requirements while performing efficient repairs even in the case of multiple simultaneous node failures.

In \cite{Dimakis:Network10}, Dimakis et al. present a class of \emph{regenerating codes},
which efficiently trade off per node storage and repair bandwidth for single node repair.
These codes are designed to possess a property of maximum distance separable (MDS) codes, referred to as \emph{`any $k$ out of $n$'} property, wherein the
entire data can be reconstructed by contacting to any $k$ storage nodes
out of $n$ nodes in the storage system. By utilizing a network coding approach, the notion of 
{\em functional repair} is developed in \cite{Dimakis:Network10}. Here, the original 
failed node may not be replicated exactly after node repair, but can be repaired such that the data stored on the repaired node is {\em functionally} equivalent to that stored on the failed node. On the other hand, {\em exact repair} requires that the regeneration process results in an exact 
replica of the data stored on the failed node. This is essential due to 
the ease of maintenance and other practical requirements such as maintaining 
a code in its systematic form. 

Exact repair may also prove to be advantageous compared 
to functional repair in the presence of eavesdroppers, as the latter 
scheme requires updating the coding coefficients, which in turn may leak additional 
information to eavesdroppers \cite{Pawar:Securing11}. The design of exact regenerating 
codes achieving one of the two ends of the trade off between storage and 
repair bandwidth has recently been investigated by researchers.
In particular, Rashmi et al. \cite{Rashmi:Optimal11} propose codes that are optimal 
for all parameters at the minimum bandwidth regeneration (MBR) 
point. For the minimum storage regeneration (MSR) point, on the other hand, the codes presented in \cite{Rashmi:Optimal11}  have their rate upper bounded by $\frac{1}{2} + \frac{1}{2n}$. In high rate regime (i.e., $\frac{k}{n} > \frac{1}{2}$), the codes at the MSR point have recently been proposed under various restrictions on per node storage $\alpha$  (see e.g., \cite{Papailiopoulos:Repair11,Cadambe:Permutation11,Tamo:Zigzag11,Wang:codes11} and references therein). Some of these codes at the MSR point allow for bandwidth efficient repair of only systematic nodes, e.g., \cite{Cadambe:Permutation11,Tamo:Zigzag11}.

\subsection{Cooperative repair}

As discussed above, DSS can also exhibit multiple simultaneous node failures, and it is desirable that these be repaired simultaneously. It is not uncommon that multiple failures 
occur in DSS, especially for large-scale systems. In addition, some DSS administrators may choose to wait to initiate a repair process after a critical number of failures has occurred (say $t$ of them) in order to render the entire process more efficient and less frequent.
For example, TotalRecall \cite{Bhagwan:Total04} currently executes a node repair process only 
after a certain threshold on the number of failures is reached. In such multiple 
failure scenarios, each new node replacing a failed one can still 
contact $d$ remaining (surviving) nodes to download data for the repair process. 
In addition, replacement nodes, after downloading data from surviving nodes, can 
also exchange data within themselves to complete the repair process. This repair 
process is referred to as {\em cooperative repair} in \cite{Hu:Cooperative10},
which presents network coding techniques to enable such repairs. 
Cooperative repair is shown to be essential as it can help in lowering the 
total repair bandwidth compared to the $t=1$ case. Flexibility of the 
choice on download nodes during the repair process is analyzed in~\cite{Wang:MFR10}. 
\cite{Kermarrec:Repairing11}, focusing on functional repair, shows that under 
the constraint $n=d+t$, deliberately delaying repairs (and thus increasing $t$) does 
not result in gains in terms of MBR/MSR optimality.  \cite{Kermarrec:Repairing11} 
and \cite{Shum:Existence11} utilize a cut-set bound argument and derive cooperative
counterparts of the end points of the trade off curve between repair bandwidth and per node storage for regenerating codes. These two points are named
as the minimum bandwidth cooperative regenerating (MBCR) point and the minimum storage 
cooperative regenerating (MSCR) point (see also~\cite{Oggier:Coding12}).
The work in~\cite{Shum:Existence11} shows the existence of cooperative 
regenerating codes with optimal repair bandwidth. Explicit code constructions 
for exact repair in this setup are presented in~\cite{Shum:Exact11}, for the MBCR point,
and in~\cite{Shum:Cooperative11}, for the MSCR point. These constructions 
are designed for the setting of $d=k$. (See also~\cite{Shum:Cooperative12}.) 
Interference alignment is used in~\cite{LeScouarnec:Exact12} to construct codes to operate 
at the MSCR point. This construction is limited to the case $k=2$ with $d\geq k$, 
and does not generalize to $k\geq 3$ with $d>k$. An explicit construction for the MBCR point, with the restriction that $n=d+t$ for any $t\geq 1$, is presented in~\cite{Jiekak:CROSS12}. Finally, the reference~\cite{Wang:Exact12} presents  codes for the MBCR point for all possible parameter values. Noting the significance of cooperative repair in DSS, regenerating codes that have resilience to eavesdropping attacks will have greater value if 
they also have efficient cooperative repair mechanisms. 

\subsection{Security in distributed storage systems}

The security of systems can be understood in terms of  their resilience to either (or both) active 
or passive attacks~\cite{Goldreich:Foundations04, Delfs:Introduction07}. Active attacks 
include settings where the attacker modifies existing packets or injects new ones into the system, 
whereas passive attacks include eavesdroppers observing the information being 
stored/transmitted. For DSS, cryptographic approaches like private-key 
cryptography are often logistically prohibitive as the secret key distribution between each pair of nodes and 
its renewal are highly challenging, especially for large-scale systems. In addition, 
most cryptographic approaches are typically based on certain hardness results, which, if repudiated, could leave the system
 vulnerable to attacks. On the other hand, information theoretic security, see, e.g., 
\cite{Shannon:Communication49,Wyner:Wire-tap75}, presents secrecy guarantees even with infinite 
computational power at eavesdroppers without requiring the sharing and/or distribution of keys. This approach 
is based on the design of secrecy-achieving coding schemes by taking into account 
the amount of information leaked to eavesdroppers, and can offer new solutions 
to security challenges in DSS. In its simplest form, the security 
can be achieved with the one-time pad scheme~\cite{Vernam:Cipher26}, which claims the security
of the data in a ciphertext, which is obtained by XOR of the data and a uniform key. This approach is of significant
value to DSS. For example, consider a system storing the key at a node, and the ciphertext 
at another node. Then, the eavesdropper can not obtain any information regarding the data by observing one of these two nodes, whereas the data collector can contact to both nodes and decipher the data.

The problem of designing secure DSS against eavesdropping attacks has recently been studied by Pawar et al. \cite{Pawar:Securing11}, where the authors consider 
a passive eavesdropper model by allowing an eavesdropper to observe the data stored on any $\ell$ $(<k)$ 
storage nodes in a DSS employing an MBR code. The proposed schemes are designed
for the `bandwidth limited regime', and shown to achieve an upper bound
on the secure file size, establishing its optimality. Shah et al.~\cite{Shah:Information11} 
consider the design of secure MSR codes. They point out that the eavesdropper model 
for an MSR code should be extended compared to that of an MBR code. The underlying
reason is that at the MSR point of operation, an eavesdropper may obtain additional information by 
observing the data downloaded during a node repair (as compared to just observing the stored content). Thus, at the MSR point, the eavesdropper is modeled with
a pair ($\ell_1,\ell_2$) with $\ell_1+\ell_2<k$ specifying the eavesdropper that
has knowledge of the content of an $\ell_1$ number of nodes, and, in addition, 
has knowledge of the downloaded information (and hence also the stored content) 
of an $\ell_2$ number of nodes. We note that, as the amount of the data downloaded during a node repair and per node storage for MBR codes are the same, the two notions are 
different only at the MSR point. Considering such an eavesdropper 
model, Shah et al. present coding schemes utilizing product matrix 
codes \cite{Rashmi:Optimal11}, and show that the bound on secrecy capacity 
in \cite{Pawar:Securing11} at the MBR point is achievable. They further use product matrix based 
codes for the MSR point as well, and show that the bound in \cite{Pawar:Securing11} is achievable
only when $\ell_2 = 0$. More recently, based on appropriate maximum rank distance pre-coding of Zigzag codes~\cite{Tamo:Zigzag11} and their variants \cite{Wang:codes11}, secure codes for the MSR point are proposed in~\cite{Rawat:Optimal12}. This construction is shown to achieve the secrecy capacity for a class of systems where only the downloads of the systematic nodes are observed by the eavesdropper in~\cite{Goparaju:Data13} for $d=n-1$ for any $(\ell_1,\ell_2)$. 
Besides this classical MBR/MSR setting, the 
security aspects of locally repairable codes (see, e.g., \cite{Oggier:Homomorphic11, Oggier:Self11,Gopalan:locality11,Huang:Pyramid07,Prakash:Optimal12,Papailiopoulos:Locally12}) are studied in~\cite{Rawat:Optimal12}; and, security of DSS against active eavesdroppers is investigated in~\cite{Oggier:Byzantine11,
Rashmi:Regenerating12, Silberstein:Error12, Han13}.

\subsection{Contributions and organization}

In this paper, we analyze and design secure and cooperative regenerating codes for 
DSS. In terms of security requirements, we utilize a 
passive and colluding eavesdropper model as presented in \cite{Shah:Information11}. In this model, 
during the entire life span of the DSS, the eavesdropper can gain access 
to data stored on an $\ell_1$ number of nodes, and, in addition, it observes 
both the stored content and the data downloaded (for repair) on an 
additional $\ell_2$ number of nodes. Given this eavesdropper model, we
focus on the problem of designing secure cooperative regenerating codes in the 
context of DSS that perform multiple simultaneous node repairs. 
This scenario generalizes the single node repair setting considered in 
earlier works to multiple node failures. In this paper, we establish the secrecy capacity for the MBCR point, and propose some secure codes for the MSCR point that are optimal for some special cases. In all scenarios, the achievability results allow for exact repair. The main secrecy achievability coding argument in this paper is obtained by utilizing a secret pre-coding (randomization) scheme to obtain secure coding schemes for DSS. In some cases, this pre-coding 
is established simply with the one-time pad scheme, and in others {\em maximum rank distance} (MRD) codes are utilized similar to the classical work of~\cite{Shamir:How79}. We remark that utilization of such pre-coding mechanisms is critical in showing the security of the proposed schemes. In particular, our security proofs are based on an oracle-type proof, where the eavesdropper is asked to decode the random symbols given the secure data symbols in addition to its observations. We design the pre-coding mechanisms to allow for such a security analysis. For example, when we utilize MRD pre-coding, we show that the eavesdropper has enough number of evaluations of an underlying polynomial (used in MRD pre-coding) at linearly independent points to resolve for the random symbols.
We summarize our contributions in the following.
\begin{itemize}
\item We present an upper bound on the secrecy capacity for MBCR codes. This bound follows from the information theoretic analysis of counting the \emph{secure flow} for a particular repair instance in DSS employing an MBCR code.
\item We present a secure MBCR code for $n=d+t$ by employing a maximum rank distance pre-coding of the codes proposed in~\cite{Jiekak:CROSS12}. By comparing the secure file size of this code with the upper bound obtained, we characterize the secrecy capacity for MBCR codes for $n=d+t$ (for any $\ell_1$).
\item For $n>d+t$, we present secrecy capacity achieving codes (for any $\ell_1$) by utilizing the bivariate polynomial proposed in~\cite{Wang:Exact12}. In particular, our construction is based on randomizing some appropriate coefficients of the underlying bivariate polynomial.
\item For MSCR codes, we obtain an upper bound on the secrecy capacity against a passive eavesdropper that takes into account the amount of information leaked to the download-observing eavesdroppers.
\item We present a secure MSCR code for $k=t=2$ based on the code proposed in~\cite{LeScouarnec:Exact12}. In particular, we place random symbols on some nodes in DSS in addition to utilizing an appropriate one-time pad scheme for securing the data stored on other nodes. This construction is shown to achieve optimal secure file size. (Note that, for the case where $k=t=2$, under the restriction that $\ell_1 + \ell_2 < k$ a non-trivial eavesdropper can only be associated with $(\ell_1, \ell_2) = (1, 0)$ or $(\ell_1,\ell_2) = (0, 1)$.)
\item Finally, we construct secure MSCR codes when $d=k$ and characterize achievable secure file size using such codes for any $(\ell_1,\ell_2)$. This construction is based on maximum rank distance pre-coding of the codes proposed in~\cite{Shum:Cooperative11}. We show that this construction achieves the secrecy capacity of MSCR codes for a restrictive eavesdropper model specified by $(\ell_1,\ell_2)$ with $\ell_2 \leq 1$.
\end{itemize}

The rest of the paper is organized as follows. In Section II, we 
provide the general system model together with some preliminary results 
utilized throughout the text. Section III provides the analysis of 
secure MBCR codes, and Section IV is devoted to the secure MSCR 
codes. The paper is concluded in Section V. Some of the results and proofs are relegated to appendices to enhance the flow 
of the paper.


\section{System Model and Preliminaries}
\label{sec:sys_model}

Consider a DSS consisting of $n$ live nodes (at a given time)
and a file $\fv$ of size $\mathcal{M}$ over a finite field $\mathbb{F}$ that 
needs to be stored on the DSS\footnote{The size of $\F$ is specified later in the context of specific coding schemes.}. The file $\fv$ is encoded into $n$ data blocks $\xv = (\mathbf{x}_1,\ldots, \mathbf{x}_n)$, each of length $\alpha$ over 
$\mathbb{F}$ with $\alpha \geq \frac{\mathcal{M}}{k}$. 
Given the codeword $\xv$, node $i$ in an $n$-node DSS stores encoded block 
$\xv_i$. (In this paper, we use $\xv_i$ to represent both block $\xv_i$ 
and a storage node storing this encoded block interchangeably.) Motivated by 
the MDS property of the codes that are traditionally proposed to store data 
in centralized storage systems \cite{Patterson:case88, Blaum:EVENODD95, Blaum:MDS96}, the works 
on regenerating codes focus on storage schemes that have `any $k$ out of $n$'
property, i.e., the contents of any $k$ nodes suffice to recover the
file. We focus on codes with this property\footnote{Note that having `any $k$ out of $n$' property does not necessarily imply that the code is an MDS code. A vector code $\Cc$ defined over $\F_q$ with symbols in $\F_q^\alpha$ is said to be MDS if $\alpha|\log_q |\Cc|$ and $d_\textrm{min}=n - \frac{\log_{q}{|\cC|}}{\alpha} + 1$.}.

We use the following notation throughout the text.
We usually stick with the notation of having vectors 
denoted by lower-case bold letters; sets and subspaces are denoted 
by calligraphic fonts. For $a<b$, $[a:b]$ represents the set of 
numbers $\{a,a+1,\cdots, b\}$.  Similarly, $a_{{i_1:i_2}, {j_1:j_2}}$ denotes the set $\{a_{i_1, j_1},\ldots, a_{i_2, j_1},\ldots, a_{i_1,j_2},\ldots, a_{i_2, j_2}\}$. In addition, $a_{i_1:i_2, j}$ and $a_{i, j_1:j_2}$ denote $\{a_{i_1, j},\ldots, a_{i_2, j}\}$ and $\{a_{i, j_1},\ldots, a_{i, j_2}\}$, respectively.
The symbols stored at node $i$ is represented by the vector $\sv_i$, 
the symbols transmitted from node $i$ to node $j$ is denoted as $\dv_{i,j}$, 
and the set $\dv_j$ is used to denote all of the symbols downloaded at node $j$. 
DSS is initialized with the $n$ nodes containing encoded symbols, i.e., 
$\sv_i=\xv_i$ for $i=1, \cdots, n$.


\subsection{Cooperative repair in DSS}

In most of the studies on DSS, exact repair for regenerating codes is analyzed in the context of single node failure. However, it is not uncommon to see simultaneous multiple node
failures in storage networks, especially for large ones.
The basic setup involves the simultaneous repair of $t$ (greater
than one) failed nodes. After the failure of $t$ storage 
nodes, the same number of newcomer nodes are introduced to the system. Each newcomer node
connects to $d$ arbitrary live storage nodes and downloads $\beta$ symbols from each of 
these nodes. In addition, utilizing a cooperative approach, each newcomer node
also contacts other newcomer nodes involved in node repair process and downloads $\beta'$ symbols from 
each of these nodes. Hence, the total repair cost is given by
\begin{equation*}
\gamma=d\beta + (t-1) \beta'.
\end{equation*}
Each newcomer node, to repair the $i$th node of the original
network, uses these $d\beta+(t-1)\beta'$ number of 
downloaded symbols to regenerate $\alpha$ symbols, $\mathbf{x}_i$, and stores 
these symbols. The $t$ nodes simultaneously repaired in a cooperative manner constitute a repair group. 

This exact repair process preserves the `$k$ out of $n$ property', i.e., 
data stored on any $k$ nodes (potentially including the nodes that are 
repaired) allows the original file $\fv$ to be reconstructed.
See Fig.~\ref{fig:FlowGraph1}.

We remark that, as also argued in~\cite{Wang:Exact12},
$d\geq k$ can be assumed without loss of generality.
(Earlier papers on the subject assumed $d\geq k$ for simplicity.
See, e.g.,~\cite{Shum:Cooperative11,Shum:Exact11,LeScouarnec:Exact12,
Jiekak:CROSS12,Shum:Cooperative12}.)
Remarkably, if $d<k$, a data collector can reconstruct
the whole file by contacting only $d$ nodes, as from
these nodes the other nodes can be repaired in groups
of size $t$. Thus, any $(n,k,d)$ code with $d<k$ can be
reduced to $(n,k'=d,d)$ code. Therefore, without loss
of generality, we will assume $d\geq k$.

\begin{figure}[t]
\centering
\includegraphics[width=0.5\columnwidth]{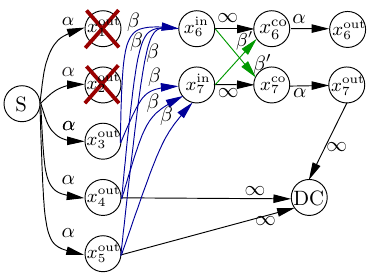}
\caption{Information flow graph of DSS implementing cooperative repair.
In this representative example, we have $n=5$, $d=k=3$, and $t=2$. Accordingly,
after a failure of two nodes, namely node $1$ and node $2$, the system
cooperatively repairs these two nodes as node $6$ and node $7$. Downloads
from live nodes (blue) and from cooperative repair pairs (green) are shown.
Due to exact repair, the network will repair the nodes to satisfy 
$x^{\rm out}_6=x^{\rm out}_1$ and $x^{\rm out}_7=x^{\rm out}_2$.}
\label{fig:FlowGraph1}
\end{figure}


\subsection{Information flow graph} 

In their seminal work \cite{Dimakis:Network10}, Dimakis et al. model the operation of 
DSS using a multicasting problem over an information flow graph. 
(See Figs.~\ref{fig:FlowGraph1} and \ref{fig:FlowGraph2} 
for the flow graph in the cooperative setting.) An information flow graph representation of a DSS consists of three types of nodes:
\begin{itemize}
\item Source node ($S$): Source node contains $\mathcal{M}$ symbols 
long original file $\fv$. The source node is connected to $n$ nodes.
\item Storage nodes ($(x^{\rm in}_i,x^{\rm co}_i,x^{\rm out}_i)$): 
In an information flow graph associated with cooperative repairs, we represent each node with a combination 
of three sub-nodes: $x^{\rm in}$, $x^{\rm co}$, and $x^{\rm out}$. 
Here, $x^{\rm in}$ is the sub-node having the connections from the 
live nodes, $x^{\rm co}$ is the sub-node having the connections 
from the nodes under repair in the same repair group, and $x^{\rm out}$ 
is the storage sub-node which represents the content stored on the corresponding node in DSS. $x^{\rm out}$  is contacted by a data collector or other nodes during node repairs. $x^{\rm in}$ is connected 
to $x^{\rm co}$ with a link of infinite capacity, $x^{\rm co}$ is 
connected to $x^{\rm out}$ with a link of capacity $\alpha$. We represent 
cuts using a notation with bars as in $(x^{\rm in},x^{\rm co}|x^{\rm out})$, 
meaning the cut is passing through the link between $x^{\rm co}$ and 
$x^{\rm out}$. (See Fig.~\ref{fig:FlowGraph2}.) The nodes on the right hand 
side of the cuts belong to data collector side, represented by the set $\Dc$, 
whereas the nodes belonging to the left hand side of the cuts belong to
$\Dc^c$, the source side.
For a newcomer node $i$, $x^{\rm{in}}_i$ is connected to $x^{\rm{out}}$ 
sub-nodes of $d$ live nodes with links of capacity $\beta$ symbols each, 
representing the data downloaded during node repair. $x^{\rm co}_i$ sub node associated with this newcomer node
also connects to $x^{\rm{in}}$ sub-nodes of $(t-1)$ remaining nodes that are being repaired in the 
same group, and each link of these connections has a capacity of $\beta'$.
\item Data collector node(s) ($\rm{DC}$): Each data collector contacts 
$x^{\rm{out}}$ sub-node of $k$ live nodes with edges of $\infty$-link capacity.
\end{itemize}


\subsection{MBCR and MSCR points}
\label{sec:MBCRMSCR}

With the aforementioned values of capacities of various edges in the 
information flow graph, the DSS is said to employ an $(n,k,d,\alpha,\beta,\beta')$ 
code. For a given graph $\Gc$ and data collectors $\rm{DC}_i$, the file 
size that can be stored in such a DSS can be bounded using the max 
flow-min cut theorem for multicasting utilized in 
network coding~\cite{Ahlswede:Network00,Ho:random06}.
\begin{lemma}[Max-flow min-cut theorem for multicasting~\cite{Ahlswede:Network00,Ho:random06,Dimakis:Network10}]
$$\Mc \leq \min_{\Gc} \min_{\rm{DC}_i} \rm{max flow}(S \to \rm{DC}_i,\Gc),$$
where $\rm{flow}(S \to \rm{DC}_i,\Gc)$ represents the flow from the source 
node $S$ to data collector $\rm{DC}_i$ over the graph $\Gc$.
\end{lemma}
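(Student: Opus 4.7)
The plan is to reduce the lemma to the classical max-flow min-cut theorem for single-source multicast network codes. First, I would observe that every evolution of the DSS --- that is, every sequence of failure events, choices of $d$ helper nodes, and cooperative groupings of $t$ newcomers --- produces one particular realization of the information flow graph $\Gc$ built from the sub-node template $(x^{\rm in}_i, x^{\rm co}_i, x^{\rm out}_i)$ described earlier, with edge capacities $\beta$, $\beta'$, $\alpha$, and $\infty$ as specified. Likewise, any data collector that queries some set of $k$ live nodes at some point in time corresponds to a sink $\rm{DC}_i$ attached to those $k$ storage sub-nodes by infinite-capacity edges, so the DSS induces a single-source multicast instance on $\Gc$ with source $S$ carrying the $\Mc$-symbol file $\fv$ and sink family $\{\rm{DC}_i\}$.

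Next, I would argue that an admissible DSS code of file size $\Mc$ supporting the MDS property gives, on each realized $\Gc$, a feasible network code at rate $\Mc$ for this multicast instance: by definition every data collector must reconstruct $\fv$ from the symbols carried on the edges entering it, using the encoding/repair operations prescribed at the storage sub-nodes. Hence achievability of file size $\Mc$ on the DSS implies achievability of multicast rate $\Mc$ on $\Gc$ for the full family of sinks realized by $k$-subset queries.

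The third step invokes the network coding multicast theorem of Ahlswede et al. and Ho et al.: on a single-source directed acyclic graph, multicast rate $\Mc$ is achievable if and only if $\Mc \leq \rm{max\,flow}(S \to \rm{DC}_i, \Gc)$ for every sink $\rm{DC}_i$. Applied to each fixed $\Gc$ this forces $\Mc \leq \min_i \rm{max\,flow}(S \to \rm{DC}_i, \Gc)$, and since the DSS code is committed in advance and must remain correct for every admissible failure/repair trajectory and every legal $k$-subset query, the bound must hold uniformly over all such realizations. Taking the minimum also over $\Gc$ yields the claimed inequality.

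The main subtlety, and the step I would handle most carefully in a full write-up, is precisely that $\Gc$ is not a single fixed graph and the sink collection is not a single fixed sink set: different failure/repair schedules and helper selections generate different information flow graphs, and different data collectors pick different $k$-subsets of live nodes. The bound therefore has to be justified by universal quantification over all realizable $\Gc$ and over all admissible $\rm{DC}_i$ within each such $\Gc$, rather than by a single-instance multicast argument. Once this universal quantification is made explicit, the stated inequality follows directly from the network coding multicast bound applied graph-by-graph and sink-by-sink.
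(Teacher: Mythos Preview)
Your argument is correct and is essentially the standard justification: fix a realization $\Gc$ of the information flow graph, observe that a DSS code of file size $\Mc$ yields a feasible multicast network code of rate $\Mc$ to every admissible sink $\rm{DC}_i$, apply the single-source multicast max-flow min-cut bound of Ahlswede et al.\ and Ho et al.\ sink-by-sink, and then quantify over all realizable $\Gc$. The subtlety you flag --- that the bound must hold uniformly over all failure/repair trajectories and all $k$-subset queries, not just for a single instance --- is exactly the point of the double minimum in the statement, and you handle it correctly.

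There is nothing to compare against in the paper: the lemma is stated with citations to \cite{Ahlswede:Network00,Ho:random06,Dimakis:Network10} and is not proved in the text. The paper treats it as a known result imported from the network coding literature (in particular, the DSS instantiation is the one developed in \cite{Dimakis:Network10}), so your write-up is effectively a self-contained expansion of what those references establish.
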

Therefore, e.g., for the graph in Fig.~\ref{fig:FlowGraph2}, 
$\mathcal{M}$ symbol long file can be delivered to a data collector $\rm{DC}$, 
only if the min cut is at least $\mathcal{M}$. 

Dimakis et al. \cite{Dimakis:Network10} obtain the following bound (for $t=1$ case) by considering $k$ successive node failures
and evaluating the min-cut over all possible graphs.
\begin{align}
\mathcal{M} \leq \sum_{i=0}^{k-1}\min\{\alpha, (d - i)\beta\}
\label{eq:dimakis_thm}
\end{align}
We emphasize that the min-cut for this ($t=1$) case is given by the scenario
where $k$ successively repaired nodes are connected to $\rm{DC}$, and for each successive repair, the repaired node $i+1$ also connects to
$i$ number of previously repaired nodes. Hence, for each $\rm{DC}$-connected node, 
its contribution to the value of a cut from S to {\rm DC} is equal to $(d-i)\beta$ if the cut through the node is of type $(|x^{\rm in},x^{\rm out})$,
and is equal to $\alpha$ if the cut separates two sub-nodes, i.e., the cut through the node is of type $(x^{\rm in}|x^{\rm out})$.
(Note that, $x^{\rm co}$ does not appear here as the model considered in~\cite{Dimakis:Network10}
does not involve cooperative repair.) The codes that attain the bound
in (\ref{eq:dimakis_thm}) are named as {regenerating codes} \cite{Dimakis:Network10}.

\begin{figure}[t]
\centering
\includegraphics[width=0.5\columnwidth]{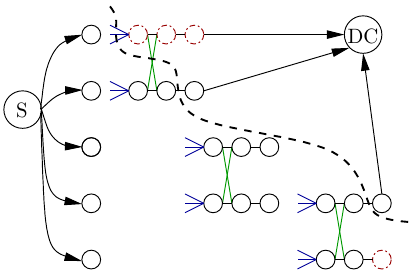}
\caption{Information flow graph of DSS implementing cooperative repair
under security constraints. In this representative example, we have 
$n=5$, $d=k=3$, and $t=2$. Multiple repair stages and a cut, represented by
dashed line, through the nodes connected to the $\rm{DC}$ are shown. 
The figure has different cut types: 
The first repaired node has a cut of type $(|x^{\rm in},x^{\rm co},x^{\rm out})$
and the second has a cut of type $(x^{\rm in},x^{\rm co}|x^{\rm out})$.
Nodes that are being eavesdropped are indicated with dashed-dotted lines. 
Here, both the content and the downloads
of the first repaired node is observed by the eavesdropper ($\ell_2=1$),
and only the content of the last repaired node is observed by the eavesdropper
($\ell_1=1$). Accordingly, eavesdropper has observations of $d\beta+(t-1)\beta'$ 
downloaded symbols from the first repaired node, and has $\alpha$ number of symbols
from the last repaired node.}
\label{fig:FlowGraph2}
\end{figure}

Analysis of the cut-set bounds for cooperative regenerating codes
are provided in~\cite{Kermarrec:Repairing11,Shum:Existence11}.
(See also the arguments given in~\cite{Hu:Cooperative10,Oggier:Coding12}.
Here, we follow the notations of~\cite{Kermarrec:Repairing11,Oggier:Coding12}.)
Consider a scenario where groups of nodes (each group having $t$ nodes) are consecutively repaired in DSS. Let us enumerate the groups that are consecutively repaired as $i=0,\cdots,\mu-1$. There are in total $\mu t$ number of nodes in this repair process, and consider that $\rm{DC}$ contacts $k$ out of these $\mu t$ nodes. Let us denote the number of nodes in repair group $i$ that are contacted by the $\rm{DC}$ as $n_i$ such that
$n_i \in [0:t],~\text{for all}~i \in  \{0,1,\ldots,\mu-1\},$
and
$\sum\limits_{i=0}^{\mu-1} n_i = k.$

The cut-set bound for this scenario is given by the following.
\begin{align}\label{eq:CoopCutsetBound}
&\Mc \leq \sum\limits_{i=0}^{\mu-1}
n_i \min \Bigg\{
\alpha, \left(d-\sum\limits_{j=0}^{i-1}n_j\right)\beta
+
\left(t-n_i\right)\beta'
\Bigg\}.
\end{align}
Similar to the $t=1$ case described above, the cut of type
$(x^{\rm in},x^{\rm co}|x^{\rm out})$ has a value of $\alpha$.
The cut of type $(|x^{\rm in},x^{\rm co},x^{\rm out})$, on the other
hand, has a value of $\left(t-n_i\right)\beta'$ due to the links coming
from the nodes under repair that are not connected to $\rm{DC}$ and additional value of $(d-\sum\limits_{j=0}^{i-1}n_j)\beta$
due to the connections to the previously repaired live nodes that are not contacted by $\rm{DC}$.
(Here, we again subtract the values of the flows from the nodes 
already belonging to the data collector side, $\Dc$.)
The cut of type $(x^{\rm in}|x^{\rm co},x^{\rm out})$ has value
of $\infty$ and hence, does not appear in the min-cut.

Note that, given a file size $\mathcal{M}$, there is an inherent 
trade off between storage per node $\alpha$ and {\em repair bandwidth} 
$\gamma \triangleq d\beta+(t-1)\beta'$. This trade off, for the cooperative 
setting, can be established using a similar analyses leading to the MBR/MSR 
points from the equation \eqref{eq:dimakis_thm}.
Two classes of codes that achieve two extreme points of this trade off 
are named as 
{\em minimum bandwidth cooperative regenerating (MBCR)} codes and 
{\em minimum storage cooperative regenerating (MSCR)} codes. The former 
is obtained by 
first finding the minimum possible $\gamma$ and then finding the 
minimum $\alpha$ satisfying \eqref{eq:CoopCutsetBound}.
This point is given by the following.
\begin{align}
&\alpha_{\rm{MBCR}}=\frac{\Mc}{k}\frac{2d+t-1}{2d+t-k}, \quad
\gamma_{\rm{MBCR}}=\alpha_{\rm{MBCR}}, \nonumber\\
&\beta_{\rm{MBCR}}=\frac{\Mc}{k}\frac{2}{2d+t-k}, \quad
\beta'_{\rm{MBCR}}=\frac{\Mc}{k}\frac{1}{2d+t-k} \label{eq:mbcr_point}
\end{align}

The MSCR point, on the other hand, is obtained by 
first choosing a minimum storage per node (i.e., $\alpha=\Mc /k$), and then 
minimizing $\gamma$ (via choosing minimum possible $\beta$-$\beta'$ pair) 
satisfying the min cut \eqref{eq:CoopCutsetBound}.

\begin{align}
&\alpha_{\rm{MSCR}}=\frac{\mathcal{M}}{k}, \quad
\gamma_{\rm{MSCR}}=\frac{\Mc}{k}\frac{d+t-1}{d+t-k}, \nonumber\\
&\beta_{\rm{MSCR}}=\frac{\mathcal{M}}{k}\frac{1}{d+t-k}, \quad
\beta'_{\rm{MSCR}}=\frac{\mathcal{M}}{k}\frac{1}{d+t-k}\label{eq:mscr_point}
\end{align}

We depict these two trade off points, which are directly computable 
from \eqref{eq:CoopCutsetBound}, in Fig.~\ref{fig:MBCR-MSCR}. Note that, when $t=1$, these two 
points correspond to the MBR/MSR points characterized in~\cite{Dimakis:Network10}.
(We refer reader to~\cite{Kermarrec:Repairing11,Shum:Existence11} for a detailed derivation of these two points. See also~\cite{Oggier:Coding12} for an analysis for the simplified case of when $t|k$, i.e., the number of groups satisfies $\mu=k/t$.) We note that, in the next section, we consider secure file size upper bound using similar min cut arguments in the presence of eavesdroppers.

\begin{figure}[t]
\centering
\includegraphics[width=0.6\columnwidth]{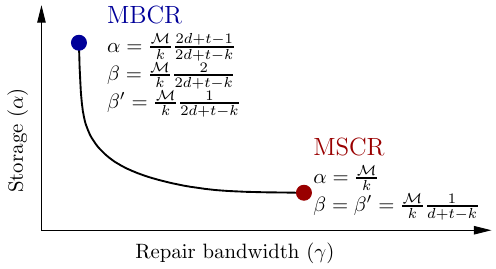}
\caption{Storage vs. repair bandwidth trade off for cooperative regenerating codes.
The repair bandwidth is given by $\gamma=d\beta+(t-1)\beta'$.}
\label{fig:MBCR-MSCR}
\end{figure}


\subsection{Eavesdropper model and Security in DSS}

We consider an $(\ell_1,\ell_2)$ eavesdropper, which has access to the stored data of any $\ell_1$ number of nodes, and additionally has access to both the stored and downloaded data of any $\ell_2$ number of  nodes. Therefore, $(\ell_1,\ell_2)$ eavesdropper has access to 
$x_i^{\rm{out}}$ for $i\in\Ec_1$ and $x_j^{\rm{in}},x_j^{\rm{co}},x_j^{\rm{out}}$ for $j\in\Ec_2$ for some $\Ec_1,\Ec_2$ such that $\Ec_1,\Ec_2\subset[1:n]$, $|\Ec_1|=\ell_1$, $|\Ec_2|=\ell_2$, and $\Ec_1\cap\Ec_2=\emptyset$. (See Fig.~\ref{fig:FlowGraph2}.)
This is the eavesdropper model defined in \cite{Shah:Information11} (adapted here to the 
cooperative repair setting), which generalizes the eavesdropper model 
considered in \cite{Pawar:Securing11}. The eavesdropper is assumed to know the 
coding scheme employed by the DSS. At the MBCR point, a newcomer downloads 
$\alpha_{\rm{MBCR}}=\gamma_{\rm{MBCR}}$ amount of data. Thus, allowing an eavesdropper to access the data downloaded during a node repair besides the content stored on the node does not provide the eavesdropper with any additional information. However, at the MSCR point, repair bandwidth is strictly greater than the per node storage $\alpha_{\rm{MSCR}}$ and an eavesdropper potentially gains more information 
if in addition to content stored on a node it also has access to the data downloaded during repair of the same node. 
We summarize the eavesdropper model together with the definition of 
achievability of a secure file size in the following.
\begin{definition}[Security against an $(\ell_1,\ell_2)$ eavesdropper]\label{def:Eavesdropper}
A DSS is said to achieve a secure file size of $\Mc^s$ 
against an $(\ell_1,\ell_2)$ eavesdropper, if for any sets $\Ec_1$ and $\Ec_2$ 
of size $\ell_1$ and $\ell_2$, respectively, $I(\fv^s;\ev)=0$. Here, $\fv^s$ 
is the secure file of size $\Mc^s$, which is first encoded to file $\fv$ of 
size $\Mc$ before storing it on DSS, and 
$\ev$ is the eavesdropper observation vector given by 
$\ev\triangleq\{x_i^{\rm{out}},x_j^{\rm{in}},x_j^{\rm{co}},x_j^{\rm{out}}: 
i\in\Ec_1, j\in\Ec_2\}$. We use $I(\cdot; \cdot)$ to denote mutual information.
\end{definition}

We remark that, as it will be clear from the following sections, when a file 
$\fv$ (of size $\Mc$) which contains a secure file (of size $\Mc^s$) is stored on a DSS, the remaining $\Mc-\Mc^s$ symbols of $\fv$ can be utilized as an additional data, which does not have security constraints. Yet, noting the possibility of 
storing this insecure data, we will refer to this uniformly distributed part 
as the random data, which is utilized to achieve security. Thus, we consider files of form $\fv=(\fv^s,\rv)$, where $\rv$ can represent random/insecure data. (We remark that DSS properties such as repair bandwidth and per node storage at the MBCR/MSCR point are defined over the file size $\Mc$.)

Here, we note the following lemma, which will be used 
in the following parts of the sequel.
\begin{lemma}[Secrecy Lemma~\cite{Wyner:Wire-tap75,Shah:Information11}]\label{thm:SecrecyLemma}
Consider a system with secure file symbols $\fv^s$,
random symbols $\rv$ (independent of $\fv^s$),
and an eavesdropper with observations
given by $\ev$. If $H(\ev)\leq H(\rv)$ and
$H(\rv|\fv^s,\ev)=0$, then $I(\fv^s;\ev)=0$.
\end{lemma}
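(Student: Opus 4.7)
The plan is to prove the lemma by a short chain of entropy manipulations that shows $H(\ev\mid \uv)\geq H(\ev)$, from which $I(\uv;\ev)=H(\ev)-H(\ev\mid \uv)\leq 0$ forces the mutual information to vanish (since it is also nonnegative).

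First, I would use the hypothesis $H(\rv\mid \uv,\ev)=0$ to rewrite the joint conditional entropy of $(\rv,\ev)$ given $\uv$ via the chain rule:
\begin{equation*}
H(\rv,\ev\mid \uv)=H(\ev\mid \uv)+H(\rv\mid \uv,\ev)=H(\ev\mid \uv).
\end{equation*}
Second, I would bound the same joint entropy from below by dropping $\ev$ and then exploiting independence of $\uv$ and $\rv$:
\begin{equation*}
H(\rv,\ev\mid \uv)\geq H(\rv\mid \uv)=H(\rv).
\end{equation*}
Combining these two displays with the hypothesis $H(\rv)\geq H(\ev)$ yields $H(\ev\mid \uv)\geq H(\ev)$, hence $I(\uv;\ev)\leq 0$, and so $I(\uv;\ev)=0$ because mutual information is nonnegative.

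There is no real obstacle here: the statement is essentially the standard ``enough key to mask the observations and observations determine the key'' argument that underlies one-time-pad style secrecy proofs, so the only care needed is to invoke the chain rule and independence correctly. The only subtle point is that the independence of $\uv$ and $\rv$ is what upgrades $H(\rv\mid \uv)$ to $H(\rv)$; without that, the bound would only give $H(\ev\mid \uv)\geq H(\rv\mid \uv)$, which is insufficient.
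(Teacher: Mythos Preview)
Your proof is correct and is essentially the same argument as the paper's: both use the chain rule on $H(\rv,\ev\mid\uv)$ (the paper phrases this as adding the nonnegative term $H(\ev\mid\uv,\rv)$ and then rewriting via $I(\ev;\rv\mid\uv)$), invoke independence to replace $H(\rv\mid\uv)$ by $H(\rv)$, and apply the two hypotheses $H(\ev)\leq H(\rv)$ and $H(\rv\mid\uv,\ev)=0$. Your ordering is slightly more compact, but the content is identical.
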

\begin{proof}
See Appendix~\ref{app:SecrecyLemma}.
\end{proof}

Finally, we provide a lemma summarizing how the cut values (from source to data collector $\rm{DC}$) in the flow graph can be computed while some of the nodes are eavesdropped in the network.
\begin{lemma}{[File size upper bound under secrecy constraints~\cite{Rawat:Optimal12}]}\label{thm:SecureCutLemma}
Consider a secure file $\fv^{s}$ of size $\Mc^s$, i.e., $\Mc^s=H(\fv^{s})$. Consider an $(\ell_1,\ell_2)$ eavesdropper observing the nodes in the sets $\Ec_1,\Ec_2$ for some $\Ec_1,\Ec_2\subset \{1,\cdots,n\}$ as defined in Definition~\ref{def:Eavesdropper}. Consider also that the data collector $\rm{DC}$ contacts to nodes in the set $\Kc=\Ec_1'\cup\Ec_2'\cup \Rc$ for some $\Ec_1'\subseteq\Ec_1$, $\Ec_2'\subseteq\Ec_2$, and $|\Kc|=k$. (We assume $\ell_1+\ell_2< k$, as otherwise the secrecy capacity of the network is zero.) Enumerating the nodes in $\Kc$ as $\{1,\cdots, k\}$, we have
\begin{equation}\label{eq:SecureCutBound}
\Mc^s \leq \sum\limits_{j=1}^k H(\sv_j|\sv_1,\cdots,\sv_{j-1},\sv_{\Ec_1'},\dv_{\Ec_2'}).
\end{equation}
\end{lemma}
\begin{proof}
The proof is summarized in Appendix~\ref{app:SecureFileSizeBound}. (See also~\cite{Rawat:Optimal12}.)
\end{proof}	
We use \eqref{eq:SecureCutBound} (in some cases with a loose bound on the conditional entropy term) in the the following sections to obtain bounds on the secure file size. We remark that in addition to discounting for the previously contacted nodes $\sv_1,\cdots,\sv_{i-1}$, this bound also discounts for the \emph{leakage} to the eavesdropper, by taking into account the leakage to the set $\{\sv_{\Ec_1'},\dv_{\Ec_2'}\}$. Here, the minimum bound on the file size occurs when $\Ec_1'=\Ec_1$ and $\Ec_2'=\Ec_2$ in the lemma above. Hence, in order to obtain tighter upper bounds on secure file size, we consider only the scenarios for which the data collector connects to all the nodes being eavesdropped.


\subsection{Maximum rank distance (MRD) codes via Gabidulin construction}
\label{sec:MRD}

In some of the following sections of the sequel, we consider maximum rank distance (MRD) pre-coding of the data at hand before storing it on DSS by using cooperative regenerating codes. Here, we introduce the Gabidulin construction of MRD codes \cite{Gabidulin:Theory85, Delsarte:Bilinear78, 
Roth:Maximum91, MacWilliams:Theory77}.

First, we introduce some notation. 
In vector representation, the norm of a vector $\vv\in\F_{q^m}^N$ is the column rank of $\vv$ over the base field $\F_q$, denoted by $Rk(\vv|\F_q)$. (This is the maximum number of linearly independent coordinates of $\vv$ over the base field $\F_q$, for a given 
basis of $\F_{q^m}$ over $\F_q$. A basis also establishes an isomorphism between $N$-length vectors, in $\F_{q^m}^N$, to $m\times N$ matrices, in $\F_q^{m\times N}$. Then, $Rk(\vv|\F_q)=\textrm{rank}(\Vm)$, where $\Vm$ is the corresponding matrix of $\vv$ for the given basis.) Rank distance between two vectors $\vv_1, \vv_2 \in \F_{q^m}^N$ is defined as $d(\vv_1,\vv_2)
=Rk(\vv_1-\vv_2|\F_q)$. (In matrix representation, this is equivalent
to the rank of the difference of the two corresponding matrices of the 
vectors, i.e., $\textrm{rank}(\Vm_1-\Vm_2)$.) Codes utilizing this distance metric are referred to as rank metric codes.

An $[N,K,D]$ rank metric code over the extension field
$\F_{q^m}$ achieving the maximum rank distance $D=N-K+1$ (for $m\geq N$)
can be constructed with the following linearized polynomial.
(This is referred to as the Gabidulin construction of MRD codes, or
Gabidulin codes \cite{Gabidulin:Theory85, Delsarte:Bilinear78, 
Roth:Maximum91, MacWilliams:Theory77}.)
\begin{eqnarray}\label{eq:MRDPoly}
f(g)=\sum\limits_{i=0}^{K-1} u_i g^{[i]},
\end{eqnarray}
where $[i]\triangleq q^i$, $u_i\in\F_{q^m}$, and $g$ is an indeterminate which takes value in $\F_{q^m}$. Given
$N$ linearly independent (over $\F_q$) points, $\{g_1, g_2,\ldots, g_N\} \subset \F_{q^m}$, the codeword $\cv = (c_1, c_2,\ldots, c_N)$ for a given set of $K$ message (information) symbols $\uv = (u_0, u_2,\ldots, u_{K-1})  \in \F^{K}_{q^m}$
is obtained as 
$c_j=f(g_j)=\sum\limits_{i=0}^{K-1} u_i g_j^{[i]}$
for $j=[1:N]$.
With generator matrix representation, we have $\cv=\uv\Gm$,
where $\Gm=[g_1, \cdots, g_N; \cdots; g_1^{[K-1]}, \cdots, g_N^{[K-1]}]$.

We remark that a linearized polynomial $f(\cdot)$ satisfies $f(a_1g_1+a_2g_2)=a_1f(g_1)+a_2f(g_2)$, for
any $a_1,a_2\in\F_q$ and $g_1,g_2\in\F_{q^m}$. This property is utilized in our code constructions.


\section{Secure MBCR codes}
\label{sec:mbcr}

In this section, we study secure minimum bandwidth cooperative regenerating codes.
We first present an upper bound on the secure file size that can be supported
by an MBCR code. Then, we present exact repair coding schemes achieving the
derived bound. In addition, we analyze how the cooperation affects 
the penalty paid in securing storage systems.

\subsection{Upper bound on secure file size of MBCR codes}
\label{subsec:mbcr_bound}

We have the following result for upper bound on secure file size for a DSS employing MBCR codes.

\begin{proposition}\label{thm:MBCRbound}
Cooperative regenerating codes operating at the MBCR
point with a secure file size of $\mathcal{M}^{s}$ satisfy
\begin{eqnarray}
\mathcal{M}^{s} &\leq& k(2d-k+t)\beta'-\ell_1(2d-\ell_1+t)\beta' \nonumber\\
&=& (k-\ell_1)(2d+t-k-\ell_1)\beta' \label{eq:MBCRbound},
\end{eqnarray}
and the MBCR point is given by $\beta=2\beta'$,
$\alpha=\gamma=(2d+t-1)\beta'$ for a file size of $\mathcal{M}=k(2d-k+t)\beta'$.
\end{proposition}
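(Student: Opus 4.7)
The plan is to specialize the cut-set bound already worked out in equations \eqref{eq:Cutsetsum2}--\eqref{eq:CoBound3} to the MBCR parameters in \eqref{eq:MBCRpoint}, then argue that among the three partition geometries considered (singleton groups, one big group, and full groups of size $t$) the singleton-group configuration delivers the tightest bound, which matches the claim.

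First, I would recall the starting inequality \eqref{eq:Cutsetsum2}: for any data-collector partition $(u_0,\dots,u_{g-1})$ with $\sum u_i=k$ and any eavesdropper allocation $(l_1^i)$ with $\sum l_1^i=\ell_1$ and $l_1^i\le u_i$, the secure file size must satisfy
\begin{equation*}
\mathcal{M}^{s}\le\sum_{i=0}^{g-1}(u_i-l_1^i)\min\Bigl\{\alpha,\ \bigl(d-\textstyle\sum_{j<i}u_j\bigr)\beta+(t-u_i)\beta'\Bigr\}.
\end{equation*}
Because the bound must hold for every choice of $(u_i)$ and is weakened by the worst-case eavesdropper placement, my task reduces to \emph{minimising the right-hand side over $(u_i)$} and \emph{maximising the eavesdropper-induced deficit over $(l_1^i)$}.

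Next, I would substitute the MBCR normalisation $\beta'=1$, $\beta=2$, $\alpha=\gamma=2d+t-1$. Case~1 (singletons, $g=k$, $u_i=1$) is the cleanest: the worst eavesdropping concentrates on the first $\ell_1$ groups (where $(d-i)\beta$ is largest), and the already-derived inequality \eqref{eq:CoBound1} becomes
\begin{equation*}
\mathcal{M}^{s}\le (k-\ell_1)(2d-k-\ell_1+1)+(k-\ell_1)(t-1)=(k-\ell_1)(2d+t-k-\ell_1),
\end{equation*}
which is exactly the claimed bound. I would then verify that Cases~2 and~3 (using \eqref{eq:CoBound2} and \eqref{eq:CoBound3} combined with the explicit form of $S$) evaluate to quantities no smaller than $(k-\ell_1)(2d+t-k-\ell_1)$ at the MBCR point, so that they are dominated by Case~1. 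A convenient way to do this cleanly is to rewrite the Case~3 bound by pulling out the common factor and checking nonnegativity of the residual, using $b=k-\lfloor k/t\rfloor t\in[0,t-1]$ and $\ell_1\le k$; the Case~2 check is essentially the $t\ge k$ specialisation of the same computation.

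The main obstacle, as I see it, is confirming that the singleton partition really is the worst for the adversary/data collector pair, i.e.\ ruling out intermediate partitions with $u_i\in\{2,\dots,t-1\}$. The key leverage here is the MBCR identity $\alpha=d\beta+(t-1)\beta'$ noted in \eqref{eq:CoopMBCR}: it forces each summand in \eqref{eq:Cutsetsum2} to saturate the $\min$ at the $\beta,\beta'$ branch for the relevant range of cumulative group sizes, so the per-node cut contribution only decreases as subsequent groups are repaired. This monotonicity lets me conclude that spreading the data-collector's $k$ nodes into singleton groups, and concentrating eavesdroppers on the earliest (largest-contribution) ones, produces the minimum. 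Combining this with the explicit Case~1 evaluation yields \eqref{eq:MBCRbound}, completing the proof.
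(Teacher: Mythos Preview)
Your approach is essentially the paper's: specialize \eqref{eq:CoBound1} to the MBCR parameters to obtain $(k-\ell_1)(2d+t-k-\ell_1)$, then check that \eqref{eq:CoBound2} and \eqref{eq:CoBound3} are no tighter (the paper does this by writing $\ell_1=\tilde{a}t+\tilde{b}$ and showing the Case~3 bound exceeds the Case~1 bound by the nonnegative term $\tilde{b}(t-\tilde{b})$ or $\tilde{b}(b-\tilde{b})$, and that Case~2 exceeds it by $(k-\ell_1)\ell_1$).

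Your final paragraph, however, reflects a slight misconception. Because \emph{every} cut furnishes a valid upper bound on $\mathcal{M}^s$, exhibiting the single Case~1 configuration already proves the proposition; there is no obligation to show that it minimises the right-hand side over all partitions $(u_i)$. The paper's proof accordingly never addresses intermediate group sizes $u_i\in\{2,\dots,t-1\}$: the comparisons with Cases~2 and~3 are there only to indicate that, among the natural candidate cuts, Case~1 is indeed the sharpest, not to certify global optimality. Your monotonicity heuristic is plausible but would need more work to be rigorous, and in any event is not required for the statement you are proving.
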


\begin{proof}
We consider the scenario where $\mu$ groups of nodes (each group having $t$ nodes) are consecutively repaired in DSS as introduced in Section~\ref{sec:MBCRMSCR}. Accordingly, the data collector $\rm{DC}$ contacts $n_i\in[0:t]$ nodes in the $i${th} repair group such that $\sum\limits_{i=0}^{\mu-1} n_i = k$. Without loss of generality we index these nodes as $\Kc = \{1,\ldots, k\}$. We consider two types of cuts in each group contacted by the $\rm{DC}$: $m_i$ number of nodes have the first cut type $(x^{\rm in}, x^{\rm co}|x^{\rm out})$, and $n_i-m_i$ number of nodes have the second cut type $(|x^{\rm in}, x^{\rm co}, x^{\rm out})$, $0\leq i\leq \mu-1$\footnote{Note that the cuts of the form $(x^{\rm in}, x^{\rm co}|x^{\rm out})$ give a cut value of $\alpha$ as opposed to $(x^{\rm in}|x^{\rm co}, x^{\rm out})$, which has cut value larger than $\alpha$. Since we are interested in the cuts of smaller size, we do not consider the cuts $(x^{\rm in}|x^{\rm co}, x^{\rm out})$.}.

We consider $\ell_1$ number of colluding eavesdroppers, each
observing the contents of different nodes. Note that, for MBCR point analysis, we can consider $\ell_2=0$ without loss of generality, as the amount of the data a particular node stores is equal to the amount of the data it downloads during its repair.
We denote the number of eavesdroppers on the nodes in the
first cut type as $l_1^{i,1}$, $0\leq i\leq \mu-1$, and the number of
eavesdroppers on the nodes in the second cut type as $l_1^{i,2}$, $0\leq i\leq \mu-1$, such that
\begin{align*}
l_1^{i,1} \leq m_i,\\
l_1^{i,2} \leq n_i-m_i,
\end{align*}
and
\begin{align*}
\sum\limits_{i=0}^{\mu-1} l_1^{i} \leq \ell_1,
\end{align*}
where $l_1^{i} = l_1^{i, 1} + l_1^{i, 2}$. We consider the repair scenario represented in Fig.~\ref{fig:MBCRCut} and utilize Lemma~\ref{thm:SecureCutLemma} to obtain the desired bound.
Here, for repair group $i$, due to the eavesdroppers,
the nodes that belong to the first type can only add the
value of $(m_i-l_1^{i,1})\alpha$ to the cut. That is, the right hand side of 
\eqref{eq:SecureCutBound} evaluates to $0$ for the $l_1^{i,1}$ number of eavesdropped nodes, and evaluates to $\alpha$ for the remaining $m_i-l_1^{i,1}$ number of nodes of first type in $i$th repair group.
The second type, on the other hand, consists of $n_i-m_i$ nodes in repair group $i$, out of
which $l_1^{i,2}$ of them are eavesdropped. As the amount of the data
downloaded during a node repair is equal to per node storage at the MBCR point,
the nodes that are eavesdropped do not add a value
to the cut. (For node $j$ in $i$th repair group, if  $j \in \Kc$ and $j$ is an eavesdropped node of second type, then right hand side of \eqref{eq:SecureCutBound}  evaluates as $H(\sv_j|\sv_1,\cdots,\sv_{j-1},\sv_{\Ec_1}) = H(\dv_j|\sv_1,\cdots,\sv_{j-1},\sv_{\Ec_1})=0$. This follows as one can get $\sv_j$ from $\dv_j$ and vice versa for the MBCR point, and $\sv_j \subseteq \sv_{\Ec_1}$. Therefore, the contribution of these nodes to the cut through their download links evaluates to zero in value.)
Consider the remaining $n_i-m_i-l_1^{i,2}$ number of
nodes in the repair group $i$, denoted as $\Nc_i$. These nodes contact $d$ live nodes, $\sum\limits_{r=0}^{i-1}n_r$
number of these contacted nodes belong to the previously repaired groups.
In addition, these nodes contact $t-n_i$ nodes that are previously repaired but not contacted by DC. 
For these nodes, the right hand side of \eqref{eq:SecureCutBound} can be evaluated as 

\begin{align*}
\sum\limits_{j\in\Nc_i}H(\sv_j|\sv_1,\cdots,\sv_{j-1},\sv_{\Ec_1}) &\overset{(a)}{=} \sum\limits_{j\in\Nc_i} H(\dv_j|\sv_1,\cdots,\sv_{j-1},\sv_{\Ec_1}) \\
&\overset{(b)}{=} H(\dv_{\Nc_i}|\sv_{\Pc},\sv_{\Ec_1}) \\
&\overset{(c)}{\leq} |\Nc_i|\left((d-\sum\limits_{r=0}^{i-1}n_r)\beta+(t - n_i)\beta'\right), \\
\end{align*}
where (a) follows as one can get $\sv_j$ from $\dv_j$ and vice versa for the MBCR point, (b) follows by summing the conditional entropy terms over $j\in\Nc_i$ where $\sv_{\Pc}$ denotes the set of nodes that are repaired before repairing the ones in $\Nc_i$, and (c) follows by upper bounding $H(\dv_{\Nc_i}|\sv_{\Pc},\sv_{\Ec_1})$ by assuming each node in $\Nc_i$ receives independent symbols from the previously repaired nodes and from the first $t-n_i$ nodes of this repair group. With $|\Nc_i|=n_i-m_i-l_1^{i,2}$, we have

\begin{figure}[t]
\centering
\includegraphics[width=0.4\columnwidth]{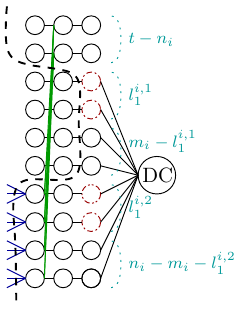}
\caption{The repair scenario considered to obtain an upper bound on secure file size for MBCR codes. Only a single repair group with $t$ number of nodes is shown. First $t-n_i$ nodes are not contacted by $\rm{DC}$. The following $l_1^{i,1}$ nodes are eavesdropped, $m_i-l_1^{i,1}$ nodes are contacted by $\rm{DC}$ with a cut of type $(x^{\rm in},x^{\rm co}|x^{\rm out})$, $l_1^{i,2}$ nodes are eavesdropped, and $n_i-m_i-l_1^{i,2}$ nodes are contacted by $\rm{DC}$ with a cut of type $(|x^{\rm in},x^{\rm co},x^{\rm out})$. For the last $n_i-m_i$ nodes, only the symbols downloaded from the first $t-n_i$ nodes contribute to the flow.}
\label{fig:MBCRCut}
\end{figure}

\begin{eqnarray}\label{eq:Cutsetsum}
\mathcal{M}^{s} \leq \sum\limits_{i=0}^{\mu-1} \left(\left(m_i-l_1^{i,1}\right)\alpha
+\left(n_i-m_i-l_1^{i,2}\right) C_i\right),
\end{eqnarray}
where

\begin{eqnarray}
C_i=\left(d-\sum\limits_{r=0}^{i-1}n_r\right)\beta
+
\left(t-n_i\right)\beta'.
\end{eqnarray}
We remark that the cut-value, i.e., the right hand side of \eqref{eq:Cutsetsum}, is minimized when we have
$$\sum\limits_{i=0}^{\mu-1} (l_1^{i,1} + l_1^{i,2}) = \ell_1.$$

We consider two scenarios in \eqref{eq:Cutsetsum}, (i) $m_i=0, l_1^{i,1}=0, l_1^{i,2}=l_1^{i}$ (for which the right hand side of \eqref{eq:Cutsetsum} evaluates to $(n_i-l_1^{i})C_i$) and
(ii) $m_i=n_i, l_1^{i,1}=l_1^{i}, l_1^{i,2}=0$ (for which the right hand side of \eqref{eq:Cutsetsum} evaluates to $(n_i-l_1^{i})\alpha$).  Hence, we obtain
\begin{align}\label{eq:Cutsetsum2}
\mathcal{M}^{s} \leq \sum\limits_{i=0}^{\mu-1}
\Bigg((n_i-l_1^i) \times
\min \Bigg\{\alpha,
\left(d-\sum\limits_{r=0}^{i-1}n_r\right)\beta
+
\left(t-n_i\right)\beta'
\Bigg\}\Bigg),
\end{align}
where $\sum\limits_{i=0}^{\mu-1}l_1^i=\ell_1$.

Note that, at the MBCR point, we have
\begin{eqnarray}\label{eq:CoopMBCR}
\alpha=d \beta + (t-1) \beta'.
\end{eqnarray}

Utilizing this, we consider the following case for
\eqref{eq:Cutsetsum2}.

\textbf{Case 1:} $\mu=k$, $n_i=1$, $\forall i=0, \cdots, k-1$. This case corresponds to a repair scenario where the data collector contacts only one node from each repair group. Accordingly, we have the following bound.

\begin{eqnarray*}
\mathcal{M}^{s} &\leq& \sum\limits_{i=0}^{k-1}
(1-l_1^i) \left((d-i)\beta + (t-1)\beta'\right)
\end{eqnarray*}
Here, the minimum cut value corresponds to having
$l_1^i=1$ for  $i=0,1,\cdots, \ell_1-1$, and
$l_1^i=0$ for $i = \ell_1,\ldots, k-1$. Hence, we get
\begin{eqnarray*}
\mathcal{M}^{s} &\leq& \sum\limits_{i=\ell_1}^{k-1}
(d-i)\beta + (t-1)\beta',
\end{eqnarray*}
from which we obtain
\begin{eqnarray}\label{eq:CoBound1}
\mathcal{M}^{s} \leq
\frac{(k-\ell_1)(2d-k-\ell_1+1)}{2}\beta + (k-\ell_1)(t-1)\beta'.
\end{eqnarray}

The bound in \eqref{eq:CoBound1} evaluates to the stated bound at the MBCR point.
\end{proof}

\begin{remark}
One can also consider the following repair scenarios in \eqref{eq:Cutsetsum2}.

\textbf{Case 2:} If $t\geq k$, $\mu=1$, $n_0=k$. Here, data collector contacts to nodes belonging to a single repair group. Accordingly, we have

\begin{eqnarray}
\mathcal{M}^{s} &\leq& (k-\ell_1) \left(d\beta + (t-k)\beta'\right). \label{eq:CoBound2}
\end{eqnarray}

\textbf{Case 3:} If $t<k$, $\mu=\floorb{k/t}+1$, $n_i=t$ for
$i=0,\cdots,\mu-2$, and $n_{\mu-1}=k-\floorb{k/t}t$. Here, data collector contacts to every node in every repair group (except for the last repair group). Denoting $a\triangleq \floorb{k/t}$ and $b\triangleq k-at$, so that
$k=at+b$, from \eqref{eq:Cutsetsum2}, we obtain
\begin{align}
\mathcal{M}^{s} \leq  \min\limits_{l_1^i\leq t \textrm{ s.t. } 
\sum\limits_{i=0}^{a} l_1^i=\ell_1} \: 
\sum\limits_{i=0}^{a-1}
(t-l_1^i) (d-it)\beta  
+ (b-l_1^a)\left\{(d-at)\beta + (t-b)\beta'\right\}.\label{eq:CoBound3}
\end{align}

We observe that both \eqref{eq:CoBound2} and \eqref{eq:CoBound3}
evaluate to loose bounds compared to that of \eqref{eq:CoBound1}. (For some parameters, Case 2/3 provides the same bound as that in Case 1. But, in general, Case 2/3 gives loose bound compared to Case 1. Details of this analysis are omitted for brevity.)
\end{remark}

In the following sections, we show that the bound given by \eqref{eq:CoBound1} is achievable, and hence it is the best bound that one can obtain for the MBCR point. That is, the repair scenario considered for Case 1 above is one of the worst cases and results in the tightest bound for the MBCR point under the secrecy constraints for all parameters. (For some parameters, Case 2/3 above represents an equivalently challenging bottleneck repair scenario for security in MBCR codes as well.)


\begin{figure*}
\centering
\includegraphics[width=0.8\textwidth]{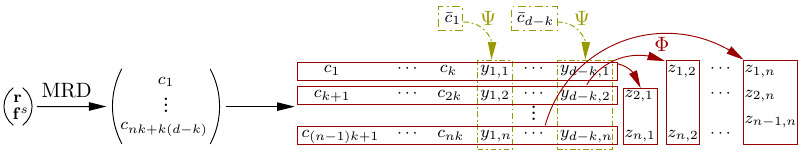}
\caption{The data $(\rv,\fv^s)$ is encoded with MRD coding into codeword symbols $c_1, \cdots, c_{nk+k(d-k)}$. Each row on the right represents a node in DSS, and the first $nk$ symbols are placed to $n$ nodes uniformly without additional coding. The remaining symbols are represented as $\bar{\cv}_j= (c_{nk+(j-1)k+1},\ldots, c_{nk+jk})$ for $j=1,\cdots,d-k$. Then, $\bar{\cv}_j$, of length-$k$, is encoded into $\yv_j = (y_{j, 1},\ldots, y_{j, n})$ using an MDS code, generator matrix of which is denoted by $\Psi$. Finally, $d$ primary symbols of node $i$, given by $\{c_{(i-1)k+1},\ldots, c_{ik}, y_{1,i},\ldots y_{d-k,i}\}$, is encoded into ($n-1$) symbols long codeword $\zv_{i} = (z_{1,i},\ldots, z_{i-1, i}, z_{i+1, i},\ldots, z_{n,i})$ using $\Phi$ as a generator matrix; and $z_{j,i}$ is placed at node $j\neq i$.}
\label{fig:SecureMBCR}
\end{figure*}

\subsection{Code construction for secure MBCR when $n=d+t$}
\label{sec:MBCR}

In this section, we focus on the special case $n=d+t$. (The case of $n>d+t$ will
be considered in the following section.)
We consider secrecy pre-coding of the data at hand before storing 
it to DSS using an MBCR code. We establish this pre-coding with maximum rank
distance (MRD) codes introduced in Section~\ref{sec:MRD}. 
(Please refer to Fig.~\ref{fig:SecureMBCR}.)

Consider the \emph{normalized} MBCR point given by 
$\mathcal{M}=k(2d-k+t)$, $\beta'=1$, $\beta=2$, $\alpha=\gamma=2d+t-1$,
$\mathcal{M}^{s}=k(2d-k+t)-\ell_1(2d-\ell_1+t)$, and $n=d+t$.
(The case of $\beta'>1$ can be obtained by implementing independent codes parallelly in the system.)
We use MRD codes with $N=K=\Mc$; hence, the rank distance
bound $D\leq N-K+1$ is saturated at $D=1$. Accordingly, we utilize 
$[\Mc,\Mc,1]$ MRD codes over $\F_{q^m}$, which maps length $\Mc$
vectors (each element of it being in $\F_{q^m}$) to length $\Mc$
codewords in $\F_{q^m}^{\Mc}$ (with $m\geq N=\Mc$).
The coefficients of the underlying linearized polynomial ($f(g)$) 
are chosen by $\mathcal{M}-\mathcal{M}^{s}$
random symbols denoted by $\rv\in \F_{q^m}^{\cM-\cM^s}$ and 
$\mathcal{M}^{s}$ secure data symbols denoted by $\fv^s\in \F_{q^m}^{\cM^s}$.
(That is, $\uv=(\rv,\fv^s)$ in \eqref{eq:MRDPoly}.)
The corresponding linearized polynomial $f(g)$ is evaluated at $\mathcal{M}$ points
$\{g_1$,\ldots, $g_{\mathcal{M}}\}$, which are linearly independent 
over $\mathbb{F}_q$. We denote these as $c_j=f(g_j)$ for 
$j=1,\cdots,\mathcal{M}$. This finalizes the secrecy pre-coding step.

The second encoding step is based on the encoding scheme for cooperative 
repair proposed in~\cite{Jiekak:CROSS12}.
(Here, we will summarize file recovery and node repair processes
for the case of MRD pre-coding, and provide the proof of security.)
Split the $\mathcal{M}$ symbols into two parts 
a) $c_1$ to $c_{nk}$,
and b) $c_{nk+1}$ to $c_{nk+k(d-k)}$. (Note that $n=d+t$ and $\mathcal{M}=nk+k(d-k)$.) 
The first part is divided into $n$ groups of $k$ symbols, and
stored in $n$ nodes. Here, node $i$ stores
$c_{(i-1)k+1}$ to $c_{ik}$. The second part is divided
into $d-k$ groups of $k$ symbols. These symbols are
encoded with an $(n,k)$ MDS code, and stored on $n$ nodes.
In particular, $\yv_j = (y_{j,1},\ldots, y_{j,n})$ is generated from symbols
$\bar{\cv}_j = (c_{nk+(j-1)k+1},\ldots, c_{nk+jk})$ by utilizing some MDS encoding matrix $\Psi$ of size $k \times n$; then, $y_{j,i}$ is stored
at node $i$, for $j=1,\cdots, d-k$.
Node $i$, having stored $\{c_{(i-1)k+1},\ldots, c_{ik}, y_{1,i},\ldots y_{d-k,i}\}$, 
which is referred to as the primary data of node $i$,
encodes these symbols
using an $(n-1,d)$ MDS code that has a generator matrix given by a (generalized) Cauchy matrix 
$\Phi$ of size $d\times (n-1)$. 
(This choice of $\Phi$ ensures that $[\Id_d~\Phi]$
is a generator matrix for an $(n+d-1,d)$ MDS code~\cite{Roth:generator85}.) Let $\zv_i = (z_{1,i},\ldots, z_{i-1, i}, z_{i+1, i},\ldots, z_{n,i})$ denote the $(n-1)$ symbols long codeword obtained by encoding the primary data of node $i$. These $n-1$ symbols are stored in every other node one-by-one. In particular, node $j \neq i$ stores $z_{j, i}$.
We call $\{z_{j,i}: i\in[1:n], i\neq j\}$ as the secondary data. This procedure is repeated
for every node, so that each node $i$ stores
$\{c_{(i-1)k+1}, \ldots, c_{ik},y_{1,i},\ldots,y_{d-k,i},
z_{i,1},\ldots, z_{i,i-1}, z_{i,i+1},$ $\ldots, z_{i,n}\}$,
and hence total number of symbols stored at
each node is $k+(d-k)+(n-1) = d+n-1 = 2d+t-1= \alpha$.

\emph{File recovery at DC:} DC connects to any $k$
nodes, without loss of generality we assume the first
$k$ nodes. From $y_{j,1:k}$, DC can obtain
$c_{nk+(j-1)k+1}, \cdots, c_{nk+jk}$, for each $j=[1:d-k]$. 
It can re-encode these into $y_{j,1:n}$ using the MDS code, and
obtain the other $y$ symbols at the remaining nodes.
Then, for each $i\in [k+1:n]$, DC can use the MDS property
of $[\Id_d \: \Phi]$, to obtain $c_{(i-1)k+1}, \cdots, c_{ik}$
symbols of node $i$ from the $k$ secondary data symbols
of the contacted nodes, i.e., $z_{j,i}$ for $j=[1:k]$,
 and additional $d-k$ symbols, $y_{j,i}$ for $j=[1:d-k]$. Having obtained
$c_1, \cdots, c_{\cM}$, DC can perform interpolation to
solve for both data and random coefficients.

\emph{Node repair:} Assume that the first $t$ nodes fail.
From the secondary data stored in the remaining $d=n-t$
nodes, $z_{t+1,i},\cdots,z_{n,i}$, one can recover
$c_{(i-1)k+1}, \cdots, c_{ik}$ and $y_{1,i},\cdots,y_{d-k,i}$
for node $i=1,\cdots, t$. (This corresponds to sending
$1$ symbol from each of $d$ nodes to each of the $t$ nodes.)
Then, to recover the secondary data stored at each node under repair,
say for the node $j=1,\cdots,t$, every other node, i.e., nodes $i\neq j$,
including the nodes under repair, computes and sends its corresponding 
encoded primary data, i.e., $z_{j,i}$, to node $j$. 
(This corresponds to sending $1$ symbol
from each node to each of the $t$ nodes.)
This achieves $\beta=2$ and $\beta'=1$ symbols for the
repair procedure.

\begin{figure*}[t]
\centering
\includegraphics[width=0.8\textwidth]{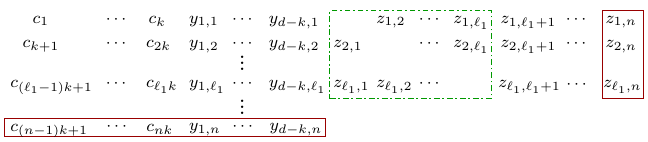}
\caption{The eavesdropped nodes are represented by the first $\ell_1$ rows, each row corresponding to a node. The symbols in the dashed-dotted (green) box are linear combinations of the symbols in $\Cc=\{c_{1}, \ldots, c_{\ell_1k}\}$ and $\Yc=\{y_{1,1},\ldots, y_{d-k,1}, \cdots, y_{1,\ell_1},\ldots, y_{d-k,\ell_1}\}$. The remaining symbols denoted with $z_{1:\ell_1,\ell_1+1:n}$ are linearly independent of $\Xc$ and $\Yc$ due to encoding with the matrix $\Phi$, where $z_{1:\ell_1,i}$ is generated from $\{c_{(i-1)k+1},\ldots, c_{ik}, y_{1,i},\ldots y_{d-k,i}\}$ for $i=\ell_1,\cdots,n$. For example, the last row, primary symbols of node $n$ generates the last column, i.e., the eavesdropped symbols $z_{1:\ell_1,n}$. (See Fig.~\ref{fig:SecureMBCR} for details of the encoding steps.)}
\label{fig:EveSymbols}
\end{figure*}

\emph{Security:} Consider that the eavesdropper is observing the
first $\ell_1$ nodes. (See Fig.~\ref{fig:EveSymbols}.)
Let $\Cc\triangleq\{c_{1}, \ldots, c_{\ell_1k}\}$,
$\Yc\triangleq\{y_{1,1},\ldots, y_{d-k,1}, \cdots, y_{1,\ell_1},\ldots, y_{d-k,\ell_1}\}$,
$\Zc\triangleq\{z_{j,i} \textrm{ for } j= 1,\ldots,\ell_1,
\textrm{ and } i=\ell_1+1,\cdots,n$\}.

Due to the coding scheme, the symbols $\{z_{j,i}\}$ for $j=1,\cdots,\ell_1$; $i=1,\cdots,\ell_1$; and $j\neq i$ are linear combinations of the symbols in $\Cc\cup\Yc$. 
In addition, the symbols in the sets $\Cc,\Yc,\Zc$ correspond to linearly independent evaluation points. This follows due to the code construction as detailed in the following. We remark that symbols in $\Cc$ are clearly independent of the ones in $\Yc \cup \Zc$. However, it is not clear at first sight whether the symbols in $\Yc$ and $\Zc$ are independent. Note that, for example, the symbols $y_{1,1:n}$ are dependent, i.e., each one can be uniquely determined by any other $k$ of them due to MDS coding by $\Psi$. And, symbols $y_{1:d-k,\ell+1:n}$ generates $\Zc$. But, the generation of $\Zc$ is \emph{together with} the symbols in $c_{\ell_1k+1:nk}$. (See Fig.~\ref{fig:EveSymbols}.) 

In particular, we have the following,

\begin{align*}
&[\tilde{C} \: \tilde{Y}]\tilde{\Phi}=Z,
\end{align*}
where
\begin{align*}
\tilde{C}=
\left[
\begin{array}{ccc}
c_{\ell_1k+1} & \cdots & c_{(\ell_1+1)k} \\
\vdots & \ddots & \vdots \\
c_{(n-1)k+1} & \cdots & c_{nk} 
\end{array}
\right],
\end{align*}

\begin{align*}
&\tilde{Y}=
\left[ \begin{array}{ccc}
y_{1,\ell_1+1} & \cdots & y_{d-k,\ell_1+1} \\
\vdots & \ddots & \vdots \\
y_{1,n} & \cdots & y_{d-k,n} \\
\end{array} \right],\\
&Z=
\left[ \begin{array}{ccc}
z_{1,\ell_1+1} & \cdots & z_{\ell_1,\ell_1+1} \\
\vdots & \ddots & \vdots \\
z_{1,n} & \cdots & z_{\ell_1,n} \\
\end{array} \right],
\end{align*}
and $\tilde{\Phi}$ is the corresponding $\ell_1$ columns of $\Phi$. (In the example above, we consider the first $\ell_1$ columns.) Consider 
$$\tilde{\Phi}=\left[\begin{array}{c}
\tilde{\Phi}^u \\ 
\tilde{\Phi}^l
\end{array}\right]
$$
with $k\times \ell_1$ matrix $\tilde{\Phi}^u$, and, accordingly, $\tilde{C}\tilde{\Phi}^u + \tilde{Y}\tilde{\Phi}^l=Z$. Here, the symbols in $\tilde{C}\tilde{\Phi}^u$ are linearly independent. (This follows, as appending upper $k$ rows of any $k-\ell_1$ number of remaining columns of $\Phi$ to $\tilde{\Phi}^u$ will constitute a $k\times k$ submatrix of $\Phi$, which is non-singular. Denoting this matrix as $\Phi'=[\tilde{\Phi}^u \cdots]$, where $\cdots$ representing the added elements of $\Phi$, we observe that all symbols in $\tilde{C}\Phi'$ are linearly independent, which implies the linear independence of symbols in $\tilde{C}\tilde{\Phi}^u$.) Then, as the symbols in the matrix $\tilde{C}$ are independent of the ones in the set $\{y_{1:d-k,1:n}\}$, it follows that symbols in the matrix $Z=\tilde{C}\tilde{\Phi}^u + \tilde{Y}\tilde{\Phi}^l$, i.e., the symbols in the set $\Zc$, and the symbols in the set $\Yc$ are linearly independent.

Therefore, due to the linearized property of the code, the eavesdropper, observing
$\ell_1\alpha=\ell_1 (2d+t-1)$ symbols, has evaluations of the polynomial $f(\cdot)$ at $\ell_1(2d+t-\ell_1)$ linearly independent points. Using the secure data symbols, together with
interpolation from these $\ell_1(2d+t-\ell_1)$ symbols, the eavesdropper can
solve for $\ell_1(2d+t-\ell_1)$ random symbols. Then, denoting the
eavesdroppers' observation as $\ev$, we have $H(\rv|\ev,\fv^s)=0$.
Since $H(\ev)=H(\rv)$, from Lemma~\ref{thm:SecrecyLemma},
we obtain $I(\fv^s;\ev)=0$.

Now, using the upper bound given in Proposition~\ref{thm:MBCRbound},
we obtain the following result.
\begin{proposition}
The secrecy capacity at the MBCR point for a file size 
of $\mathcal{M}=k(2d-k+t)\beta'$ is given by
$\mathcal{M}^{s}=k(2d-k+t)\beta'-\ell_1(2d-\ell_1+t)\beta'$, if $n=d+t$.
\end{proposition}


\subsection{Does cooperation enhance/degrade security at MBCR?}

The repair bandwidth for cooperative regenerating codes  is defined by $\gamma=d\beta + (t-1)\beta'$. In this section, we analyze $\bar{\gamma}=\frac{\gamma}{\mathcal{M}^{s}}$, the ratio of repair bandwidth to
the secure file size, referred to as
the normalized repair bandwidth (NRBW).

Without the security constraints, for which $\ell_1=0$ in
Proposition~\ref{thm:MBCRbound}, we observe that, at the MBCR point, NRBW
is given by
\begin{equation*}
\bar{\gamma}(\ell_1=0) = \frac{2d+t-1}{k(2d-k+t)},
\end{equation*}
which is equal to
\begin{equation*}
\bar{\gamma}(\ell_1=0,n=d+t) = \frac{2n-t-1}{k(2n-k-t)}
\end{equation*}
for a system with $n=d+t$. Here, the classical
(i.e., non-cooperative) scenario corresponds to $t=1$
case, which has an NRBW of
\begin{equation*}
\bar{\gamma}(\ell_1=0,n=d+t, t=1) = \frac{2n-2}{k(2n-k-1)}.
\end{equation*}
Comparing the last two equations, we see that
$$\bar{\gamma}(\ell_1=0,n=d+t)\geq \bar{\gamma}(\ell_1=0,n=d+t, t=1),$$
with equality iff $t=1$ (for $k>1$). Therefore, without the security
constraints, having simultaneous repairs of size greater
than $1$ actually increases the normalized repair bandwidth.
This nature of cooperation also results in the conclusion that
deliberately delaying the repairs does not bring additional
savings~\cite{Kermarrec:Repairing11}. (This observation is proposed
for both MBCR and MSCR points in~\cite{Kermarrec:Repairing11}
with an analysis of derivative of $\gamma$ with respect to
$t$. Here, we provide an analysis with NRBW.)

We revisit the above conclusion under security constraints.
The question is whether the cooperation (i.e., having a system
with multiple failures, or deliberately delaying the repairs)
results in a loss/gain in secure DSS. We have
\begin{equation*}
\bar{\gamma}(n=d+t) = \frac{2n-t-1}{k(2n-k-t)-\ell_1(2n-\ell_1-t)},
\end{equation*}
and 
\begin{equation*}
\bar{\gamma}(n=d+t,t=1) = \frac{2n-2}{k(2n-k-1)-\ell_1(2n-\ell_1-1)}.
\end{equation*}
A calculation similar to above shows that 
$\bar{\gamma}(n=d+t)\geq \bar{\gamma}(n=d+t,t=1)$ with equality if and only if $t=1$ (for $k>\ell_1\geq0$). This shows that NRBW for the case $t>1$ is strictly greater than that of $t=1$ when $n=d+t$ for $\ell_1<k$. The MBCR points given in
Proposition~\ref{thm:MBCRbound} for codes satisfying
$0\leq \ell_1 < k < n$, $d\geq k$, and $d=n-t$
are given in Table~\ref{tab:Coop} in Appendix~\ref{app:CoopTable}.
As evident from the table, we see that cooperation does not
bring additional savings for secure DSS at the MBCR point when
$d+t=n$. This in turn means that one may not delay
the repairs to achieve a better performance than that of single
failure-repair if $d$ is chosen such that $n=d+t$ for a given $t,n$.
However, if the downloads within the cooperative group are less
costly compared to the downloads from the live nodes, then
delaying repairs would be beneficial in reducing the total cost.
We will revisit this analysis for codes having $n>d+t$ in the next subsection.


\subsection{General code construction for secure MBCR}

The code construction presented in Section III-B has the requirement of $d=n-t$.
However, for practical systems, it may not be possible that
a failed node connects to all the remaining nodes. This brings
the necessity of code constructions for $d<n-t$. Remarkably,
for a fixed $(n,k,d,{\mathcal{M}})$, increasing $t$ can
reduce the repair bandwidth in the secrecy scenario we consider here. This is
reported in~\cite{Shum:Exact11} for DSS without secrecy constraints.
Hence, for a fixed $d$, delaying the repairs can be advantageous,
e.g., when there is a limit on the number of live nodes that can be contacted for a node repair.
In the following, we present a general construction which
works for any parameters, in particular for $n>d+t$.

The construction is based on the MBCR code proposed in \cite{Wang:Exact12}. In \cite{Wang:Exact12}, a bivariate polynomial is constructed using ${\mathcal{M}} = k(2d + t - k)$ message symbols (over $\F_q$) as the coefficients of the polynomial:
\begin{align}
\label{eq:mbcr_poly}
F(Y,Z) = \sum_{\substack{0 \leq i < k,\\ 0 \leq j < k}}a_{ij}Y^iZ^j + \sum_{\substack{0\leq i < k,\\ k \leq j < d+t}}b_{ij}Y^iZ^j  + \sum_{\substack{k\leq i < d,\\ 0 \leq j < k}}c_{ij}Y^iZ^j
\end{align}

\begin{figure*}
\footnotesize
\centering
\begin{tabular}{|c|c|c|c|c|c|c|c|c|}
  \hline
  $F(y_1,z_1)$ & $F(y_1,z_2)$ & $\cdots$ & $F(y_1,z_{\ell_1})$ & $\cdots$
& $F(y_1,z_{d+t})$ & & & \\\hline
  $F(y_2,z_1)$&$F(y_2,z_2)$&$\cdots$&$F(y_2,z_{\ell_1})$&$\cdots$
& $F(y_2,z_{d+t})$ & $\textcolor{blue}{F(y_2,z_{d+t+1})}$ & &\\\hline
  $\cdots$ & & & &
& & & & \\\hline
  $F(y_{\ell_1},z_1)$ & $F(y_{\ell_1},z_2)$ & $\cdots$ & $F(y_{\ell_1},z_{\ell_1})$ & $\cdots$
& $F(y_{\ell_1},z_{d+t})$ & $\textcolor{blue}{F(y_{\ell_1},z_{d+t+1})}$ & $\textcolor{blue}{\cdots}$ & $\textcolor{blue}{F(y_{\ell_1},z_{d+t-1+\ell_1})}$ \\\hline
  $\cdots$ & & & &
& & & & \\\hline
  $F(y_{d},z_1)$ & $F(y_{d},z_2)$ & $\cdots$ & $F(y_{d},z_{\ell_1})$ &
& & & & \\\hline
   & $\textcolor{blue}{F(y_{d+1},z_2)}$ & $\textcolor{blue}{\cdots}$ & $\textcolor{blue}{F(y_{d+1},z_{\ell_1})}$ &
& & & & \\\hline
  & & $\textcolor{blue}{\cdots}$  & $\textcolor{blue}{\cdot}$ &
& & & & \\\hline
   & & & $\textcolor{blue}{F(y_{d+\ell_1-1},z_{\ell_1})}$ &
& & & & \\\hline
\end{tabular}
\caption{Observed symbols at the eavesdroppers for a given $\ell_1$.}
\label{fig:MBCREave}
\end{figure*}
\normalsize

Here, $\{a_{ij}\}$, $\{b_{ij}\}$, and $\{c_{ij}\}$ denote $\Mc$ message symbols. Given $q > n$, two set of $n$ distinct points, $\{y_1, y_2,\ldots, y_n\}$ and $\{z_1, z_2,\ldots, z_n\}$, are chosen. The $i${th} node in the DSS store the following $2d + t - 1$ evaluations of polynomial $F(Y,Z)$.
\begin{eqnarray}
\label{eq:MBCR_node_content}
F(y_i,z_i), F(y_i,z_{i\oplus1}),\ldots, F(y_i,z_{i\oplus(d+t-1)}) \n  \\
F(y_{i\oplus1},z_i), F(y_{i\oplus2},z_{i}),\ldots, F(y_{i\oplus(d-1)},z_{i}),
\end{eqnarray}
where $\oplus$ denotes addition modulo $n$. The first $d+t$ evaluations at node $i$ can be seen as the evaluations of the univariate polynomial $f_i(Z) = F(y_i,Z)$ of degree at most $d+t-1$ at $d+t$ points. This uniquely defines the polynomial $f_i(Z)$. Similarly, the first evaluation in (\ref{eq:MBCR_node_content}), $F(y_i,z_i)$, along with last $d-1$ evaluations, $F(y_{i\oplus1},z_i),\ldots, F(y_{i\oplus(d-1)},z_{i})$, uniquely define the univariate polynomial $g_i(Y) = F(Y,z_i)$ of degree at most $d-1$. This property of the proposed bivariate polynomial based coding scheme is utilized for the exact node repair and data reconstruction processes at the MBCR point. (We refer to \cite{Wang:Exact12} for details.)

In order to get an $(\ell_1,0)$-secure code at the MBCR point, we rewrite the polynomial in (\ref{eq:mbcr_poly}) as follows:
\begin{align}
\label{eq:secureMBCR_ploy}
F(Y,Z) &=\sum_{\substack{0 \leq i < \ell_1, \\ 0 \leq j < \ell_1}}a_{ij}Y^iZ^j + \sum_{\substack{0 \leq i < \ell_1, \\ \ell_1 \leq j < k}}a_{ij}Y^iZ^j& \n \\
        &+ \sum_{\substack{\ell_1 \leq i < k,\\ 0 \leq j < \ell_1}}a_{ij}Y^iZ^j  + \sum_{\substack{\ell_1 \leq i < k, \\ \ell_1 \leq j < k}}a_{ij}Y^iZ^j& \n \\
        &+ \sum_{\substack{0\leq i < \ell_1,\\ k \leq j < d+t}}b_{ij}Y^iZ^j + \sum_{\substack{\ell_1 \leq i < k,\\ k \leq j < d+t}}b_{ij}Y^iZ^j& \n \\
        &+ \sum_{\substack{k\leq i < d,\\ 0 \leq j < \ell_1}}c_{ij}Y^iZ^j + \sum_{\substack{k\leq i < d,\\ \ell_1 \leq j < k}}c_{ij}Y^iZ^j&
\end{align}
Next, we choose $\ell_1^2 + \ell_1(k-\ell_1) + (k-\ell_1)\ell_1 + \ell_1(d+t - k) + (d-k)\ell_1 = \ell_1(2d + t - \ell_1)$ coefficients of $F(Y,Z)$, $a_{0:\ell_1-1, 0:\ell_1-1} \cup  a_{0:\ell_1-1, \ell_1:k-1} \cup a_{\ell_1:k-1, 0:\ell_1-1}\cup  b_{0:\ell_1-1, k:d+t-1}\cup c_{k:d-1, 0:\ell_1-1}$, to be symbols drawn uniformly at random from $\F_{q}$ in an i.i.d. manner. Remaining $k(2d + t -k) - \ell_1(2d + t -\ell_1) = \mathcal{M}^{s}$ coefficients of $F(Y,Z)$ are chosen to be the data symbols $\fv^{s}$ that need to be stored on the DSS. Each node $i \in [n]$ stores the evaluation of $F(Y,Z)$ as illustrated in (\ref{eq:MBCR_node_content}). It follows from the description of the coding scheme of \cite{Wang:Exact12} in the beginning of this subsection that the resulting coding scheme is an exact repairable code at the MBCR point.

Next, we show that the proposed scheme is indeed $(\ell_1,0)$-secure. Let $\ev$, $\fv^s$, and $\rv$ denote the data observed by an eavesdropper, the original data to be stored, and the randomness added to the original data before encoding, respectively. It is sufficient to show (i) $H(\ev) \leq H(\rv)$ and (ii) $H(\rv| \fv^s, \ev) = 0$ in order to establish the secrecy claim (see Lemma~\ref{thm:SecrecyLemma}). To argue the first requirement, noting that number of eavesdropped
symbols are $\ell_1\alpha=\ell_1(2d+t-1)$, we will show that $\ell_1^2-\ell_1$
number of these are linearly dependent on the remaining ones.
The eavesdropper, without loss of generality considering the
first $\ell_1$ nodes as eavesdropped nodes, observes the symbols given in Fig.~\ref{fig:MBCREave}.
Due to the code construction, each row above represents
evaluations of a polynomial of degree less than $d+t$ and each column
represents a polynomial of degree less than $d$. Hence,
we observe that each of the symbols denoted with a colored
font in the matrix of Fig.~\ref{fig:MBCREave} is a linear combination of the
remaining ones. Therefore, $H(\ev) \leq \ell_1\alpha-\ell_1(\ell_1-1)=H(\rv)$.

In order to show that second requirement also holds, we present a method to decode randomness $\rv$ given $\fv^s$ and data stored on any $\ell_1$ nodes. Once we know the data symbols $\fv^s$, we can remove the monomials associated to data symbols in $F(Y,Z)$ and the contribution of these monomials from the polynomial evaluations stored on DSS. Let $\widehat{F}(Y,Z)$ denote the bivariate polynomial that we obtain by removing the data monomials:
\begin{align}
\label{eq:MBCR_eaves_poly1}
\widehat{F}(Y,Z) &=\sum_{\substack{0 \leq i < \ell_1,\\ 0 \leq j < \ell_1}}a_{ij}Y^iZ^j + \sum_{\substack{0 \leq i < \ell_1,\\ \ell_1 \leq j < k}}a_{ij}Y^iZ^j& \n \\
        &+ \sum_{\substack{\ell_1 \leq i < k,\\ 0 \leq j < \ell_1}}a_{ij}Y^iZ^j  + \sum_{\substack{0\leq i < \ell_1,\\ k \leq j < d+t}}b_{ij}Y^iZ^j& \n \\
        &+ \sum_{\substack{k\leq i < d,\\ 0 \leq j < \ell_1}}c_{ij}Y^iZ^j&
\end{align}
$\widehat{F}(Y,Z)$ can be rewritten as:
\begin{align}
\label{eq:MBCR_eaves_poly2}
\widehat{F}(Y,Z) &=\sum_{\substack{0 \leq i < \ell_1,\\ 0 \leq j < \ell_1}}\hat{a}_{ij}Y^iZ^j + \sum_{\substack{0 \leq i < \ell_1,\\ \ell_1 \leq j < d+t}}\hat{b}_{ij}Y^iZ^j
              + \sum_{\substack{\ell_1 \leq i < d,\\ 0 \leq j < \ell_1}}\hat{c}_{ij}Y^iZ^j
\end{align}
where
$\hat{a}_{0:\ell_1-1, 0:\ell_1-1} = a_{0:\ell_1-1, 0:\ell_1-1},
 \hat{b}_{0:\ell_1-1, \ell_1:k-1} =  a_{0:\ell_1-1, \ell_1:k-1},
 \hat{b}_{0:\ell_1-1, k:d+t-1}=  b_{0:\ell_1-1, k:d+t-1},
 \hat{c}_{\ell_1:k-1, 0:\ell_1-1} = a_{\ell_1:k-1, 0:\ell_1-1},
\hat{c}_{k:d-1, 0:\ell_1-1}= c_{k:d-1, 0:\ell_1-1}.$

$\widehat{F}(Y,Z)$ in (\ref{eq:MBCR_eaves_poly2}) takes the same form as $F(Y,Z)$ in (\ref{eq:mbcr_poly}) with $k$ replaced with $\ell_1$. Now, the randomness $\rv$, coefficients of $\widehat{F}(Y,Z)$ in (\ref{eq:MBCR_eaves_poly2}), can be decoded from the data observed on $\ell_1$ nodes using the data reconstruction method described in \cite{Wang:Exact12}.
Thus, we obtain the following result.
\begin{proposition}
The secrecy capacity at the MBCR point for a file size 
of $\mathcal{M}=k(2d-k+t)$ is given by
$\mathcal{M}^{s}=k(2d-k+t)-\ell_1(2d-\ell_1+t)$ for any $n\geq d+t$.
\end{proposition}
We list some instances of this construction in Table~\ref{tab:Coop2}
in Appendix~\ref{app:CoopTable}. As evident from the table,
cooperation helps to reduce the repair bandwidth if $d<n-t$.
Thus, (secure) coding schemes for the case of $n>d+t$ are of significant
interest in order to reduce the repair bandwidth in cooperative
repair.


\section{Secure MSCR Codes}
\label{sec:mscr}

We first consider an upper bound on the secure file size for DSS employing minimum storage cooperative regenerating (MSCR) codes. We then utilize 
appropriate secrecy pre-coding mechanisms to construct achievable schemes for the upper bound.

\subsection{Upper bound on the secure file size}

At the MSCR point, nodes have minimum possible storage,
i.e., $\alpha=\frac{{\mathcal{M}}}{k}$.
Using the cut-set analysis given in Section~\ref{sec:sys_model},
one can obtain that the minimum repair bandwidth can
be attained with $\beta=\beta'=\frac{\alpha}{d-k+t}=\frac{{\mathcal{M}}}{k(d-k+t)}$.
(See also~\cite{Kermarrec:Repairing11,Oggier:Coding12}.) Therefore, the amount of data downloaded during a node repair is larger than per node storage $\alpha$ at the MSCR point. Keeping this in mind, we consider two kinds of eavesdropped nodes for security constraints: storage-eavesdropped nodes ($\Ec_1$) and download-eavesdropped nodes ($\Ec_2$).
Using the size of these sets we denote the eavesdropper
setting with $(\ell_1,\ell_2)$ as introduced in Section~\ref{sec:sys_model}.
Here, for a node in $\Ec_2$, an eavesdropper observes both the data downloaded from live nodes and from cooperating nodes (other nodes in the same repair group).

Similar to the analysis given in Section~\ref{subsec:mbcr_bound}, and utilizing Lemma~\ref{thm:SecureCutLemma}, we 
obtain the following bound on the secure file size at the MSCR point.

\begin{align}
\mathcal{M}^{s} \leq \sum\limits_{i=0}^{k-1}
(1-l_1^i-l_2^i) \times 
\min \Big\{ \alpha - I(\sv_{i};\dv_{i,\Ec_2}),
(d-i)\beta + (t-1)\beta'\Big\},
\label{eq:CoBound4}
\end{align}
where $\sv_i$ and $\dv_{i, \Ec_2}$ denote the data stored on node $i$ and data downloaded from node $i$ for repair of nodes in set $\Ec_2$, respectively. Here, for $i$th repair group, we consider $n_i=1$ number of nodes to be contacted by $\rm{DC}$. We assume that we have $l_1^i$ number of storage-eavesdropped nodes (from $\Ec_1$) and $l_2^i$ number
of download-eavesdroppers (from $\Ec_2$) in repair group $i$. Compared to the MBCR bounds, due to
eavesdroppers in $\Ec_2$, nodes that are not eavesdropped may leak information during their participation in the repair of a node having an download-observing eavesdropper. Thus, the values of the cuts of type $1$
include additional penalty terms $I(\sv_{i};\dv_{i,\Ec_2})$, counting
the leakage from the storage at the $i$th node to nodes indexed
with $\Ec_2$.
(That is, we consider a loose bound compared to that of \eqref{eq:SecureCutBound}, i.e.,
$H(\sv_i|\sv_1,\cdots,\sv_{i-1},\sv_{\Ec_1},\dv_{\Ec_2}) = H(\sv_i)-I(\sv_i;\sv_1,\cdots,\sv_{i-1},\sv_{\Ec_1},\dv_{\Ec_2})\leq
\alpha - I(\sv_i;\dv_{i,\Ec_2})$, as $\dv_{i,\Ec_2}\subset\dv_{\Ec_2}$.)
Considering the MSCR point values of $\alpha$, $\beta$, and
$\beta'$ given above, the second term inside each $\min\{ \}$ in \eqref{eq:CoBound4}
is larger than the first term.
Hence, considering that the first $k-\ell_1-\ell_2$ repairs are
eavesdropper-free, \eqref{eq:CoBound4} evaluates to the
following bound.

\begin{proposition}\label{thm:MSCRbound}
Cooperative regenerating codes operating at the MSCR
point with a secure file size of $\mathcal{M}^{s}$ satisfy
\begin{eqnarray*}
\mathcal{M}^{s} \leq \sum\limits_{i=0}^{k-\ell_1-\ell_2-1}
\alpha - I(\sv_i;\dv_{i,\Ec_2}),
\end{eqnarray*}
where the MSCR point is given by $\beta=\beta'$,
$\alpha=(d-k+t)\beta$, for a file size of ${\mathcal{M}}=k(d-k+t)\beta$.
In addition, at the MSCR point, one can bound $I(\sv_i;\dv_{i,\Ec_2})\geq \beta'=\beta$
and obtain the bound
$$\mathcal{M}^{s} \leq (k-\ell_1-\ell_2)(\alpha-\beta).$$
\end{proposition}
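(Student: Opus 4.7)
My starting point is the secure cut-set bound \eqref{eq:CoBound4}, which the excerpt has already derived for arbitrary placement of eavesdroppers across the contacted stages. I would specialize this bound to the MSCR operating point by choosing $u_i=1$ for each group (so $g=k$, as in Case~1 of the MBCR derivation) and by ordering the connected stages so that the $k-\ell_1-\ell_2$ eavesdropper-free stages are contacted first, followed by the $\ell_1+\ell_2$ eavesdropped stages. The factor $(1-l_1^i-l_2^i)$ in \eqref{eq:CoBound4} zeros out every eavesdropped stage, leaving only the first $k-\ell_1-\ell_2$ terms of the sum.

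Next, I would show that the first argument of the inner minimum is always the smaller one at MSCR. Using the normalized MSCR parameters $\alpha=d-k+t$ and $\beta=\beta'=1$, the second argument equals $(d-i)\beta+(t-1)\beta'=d+t-1-i$, which for $i\in[0:k-1]$ is at least $d-k+t=\alpha$; whereas the first argument $\alpha-I(\sv_i;\dv_{i,\Ec_2})$ is at most $\alpha$. Hence the inner minimum is always $\alpha-I(\sv_i;\dv_{i,\Ec_2})$, and summing over the $k-\ell_1-\ell_2$ eavesdropper-free stages produces exactly the first inequality of the statement.

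For the sharper bound $\mathcal{M}^{s}\leq(k-\ell_1-\ell_2)(\alpha-\beta)$, I would lower bound $I(\sv_i;\dv_{i,\Ec_2})$ by $\beta$. Pick any $j\in\Ec_2$; during its (co)repair, node $i$ transmits a vector $\dv_{i,j}$ of at most $\beta$ symbols, and since $\dv_{i,j}$ is a deterministic function of $\sv_i$ one has $I(\sv_i;\dv_{i,j})=H(\dv_{i,j})$. The key observation is that an MSCR-optimal code must use each helper and cooperator link at its full entropy $\beta$: in the cut-set derivation of the MSCR point, the second-type cut for the last contacted node evaluates to $(d-k+1)\beta+(t-1)\beta'=\alpha$, and this min-cut equality can hold only if every helper and cooperator contribution carries $\beta$ independent symbols. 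Hence $H(\dv_{i,j})=\beta$, and monotonicity gives $I(\sv_i;\dv_{i,\Ec_2})\geq I(\sv_i;\dv_{i,j})=\beta$; substituting into the first bound yields the claimed inequality.

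The main obstacle I anticipate is the entropy-saturation step $H(\dv_{i,j})=\beta$, which presumes MSCR-optimality of the code and requires one to choose a repair schedule in which node $i$ actually participates as a helper or cooperator for some $j\in\Ec_2$. Since the proposition bounds the secrecy capacity, we are free to consider the worst case over such schedules, and it is this freedom combined with tightness of the MSCR min-cut that closes the argument.
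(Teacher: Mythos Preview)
Your proposal is correct and follows essentially the same route as the paper: specialize \eqref{eq:CoBound4} with $u_i=1$, verify that at the MSCR point the second argument of the minimum is always at least $\alpha$ (hence inactive), and place the $k-\ell_1-\ell_2$ eavesdropper-free stages first so that only those terms survive. For the sharper bound the paper simply asserts $I(\sv_i;\dv_{i,\Ec_2})\geq\beta$ without proof, so your min-cut saturation heuristic already goes beyond what the paper provides; note, however, that the type-2 cut you invoke for the last contacted node only crosses the links from \emph{non}-DC-side helpers and cooperators, so it does not directly certify $H(\dv_{i,j})=\beta$ when $i$ is itself DC-contacted---closing this would require varying the DC set (or an MSR-style conditioning argument as in Shah et al.), which is precisely the obstacle you flagged.
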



\subsection{Code construction for secure MSCR when $k=t=2$}

We consider an interference alignment approach based on the one proposed in~\cite{LeScouarnec:Exact12} with $k=t=2$.
For any $(n,k,d,t)$ with $d\geq k$ and $n=d+t$, we have $\alpha=d-k+t=n-2$ and ${\mathcal{M}}=k(d-k+t)=2(d-k+t)=2\alpha$ at the normalized MSCR point, i.e., $\beta = \beta'=1$. (The general case of $\beta>1$ can be obtained by a parallel utilization of independent codes with $\beta=1$.)  For $k = 2$, the bound given in Proposition~\ref{thm:MSCRbound} implies that the achievability of positive secure file size in the presence of an eavesdropper is possible
only when $(\ell_1,\ell_2)=(1,0)$ or $(\ell_1,\ell_2)=(0,1)$. Corresponding bounds are given by $\mathcal{M}^{s} \leq \alpha$
and $\mathcal{M}^{s} \leq \alpha-1$, respectively. (For the bound corresdpoing to $\ell_2 = |\Ec_2|=1$, $\dv_{i,\Ec_2}$ necessarily consists of one
symbol as $\beta=\beta'=1$, and the non-eavesdropped
node participates in the repair of the eavesdropped node
by sending $\beta$ or $\beta'$ symbols. Note that ${\rm DC}$ contacts only $k =  2$ nodes, and one of them is eavesdropped for the purpose of obtaining the upper bound here.)
In the following, we construct codes achieving the stated
bounds for both cases, hence establishing the secrecy capacity
when $k=t=2$. We show this with codes having $n=d+t$, i.e.,
all the surviving nodes participate in a node repair.
The construction can be extended to  cases with $n>d+t$
by following a similar approach and choosing a larger field size.

\textbf{Case $1$: $\mathcal{M}^{s} = \alpha$ when $(\ell_1,\ell_2)=(1,0)$}

Consider a large enough finite field $\F_q$ (conditions on the required field will be given later) which has $\omega$ as its generator,
$\alpha$ number of random symbols $r_1, \cdots, r_\alpha$, and
$\alpha$ number of secure information symbols $s_1, \cdots,
s_\alpha$. Both information and random symbols are uniformly
distributed over the field. We construct a file $\fv = (a_1,\ldots, a_{\alpha}, b_1,\ldots, b_{\alpha}) =(r_1, \ldots, r_\alpha, r_1+s_1, \ldots, r_\alpha+s_\alpha)$ of size $\Mc$ over $\F_q$. Our coding scheme is described by the following placement.
\begin{itemize}
\item Store $\av=(a_1,\cdots,a_\alpha)$ at the first node,
\item Store $\bv=(b_1,\cdots,b_\alpha)$ at the second node, and
\item Store $\rv_i=(a_1+\omega^{(i-1)~{\rm mod}~\alpha}b_1, \cdots, a_\alpha+\omega^{(i+\alpha-2)~{\rm mod}~\alpha}b_\alpha)$
at $i$th parity node, $i\in\{1,\cdots, \alpha = n - 2\}$.
\end{itemize}
Note that we implement one-time pad scheme of Vernam~\cite{Vernam:Cipher26} for the symbols represented as $b_i$ in this construction. We represent the stored symbols of $i$th parity node as
$\rv_i^T=\av^T + \Bm_i \bv^T$, where $\Bm_i$ is a diagonal matrix with its diagonal elements given by
$\{\omega^{(i-1)~{\rm mod}~\alpha},\ldots , \omega^{(i+\alpha-2)~{\rm mod}~\alpha}\}$.
Data collector $\rm{DC}$ can reconstruct the file $\fv$ by contacting to
any of the $k = 2$ nodes, and solving $\alpha$ groups of $2$ equations
over $2$ unknowns. (Each group gives $2$ equations over $2$ unknowns. For example, the first symbols for the systematic nodes are $a_1=r_1$ and $b_1=r_1+s_1$, and these two equations can be solved to obtain $(r_1,s_1)$.) From file $\fv$, it can then obtain
the secure symbols $s_1, \cdots, s_\alpha$. This establishes data reconstruction property of the code. Cooperative repair process is similar to that of the code proposed in~\cite{LeScouarnec:Exact12}, as summarized in the following.

Without loss of generality, we consider cooperative repair of two systematic nodes. Repairs of stages involving parity nodes can be performed as that of the systematic nodes after change of variables~\cite{LeScouarnec:Exact12}. The first systematic node downloads $\vv_{1,i}\rv_i^T=\vv_{1,i}\av^T + \zv \bv^T$ from $i$th parity node which stores $\rv_i^T=\av^T + \Bm_i \bv^T$. Here, $\vv_{1,i}=\zv \Bm_i^{-1}$ and $\zv\triangleq(1,\cdots,1)$. After this step, the first systematic node has the symbols $$\cv_1=\{\zv \Bm_1^{-1}\av^T + \zv\bv^T, \cdots, \zv \Bm_d^{-1}\av^T+\zv\bv^T\}.$$  Note that the repair process is such that the interference is aligned by having terms $\zv\bv^T$ in each of the symbols downloaded by the first systematic node. The second systematic node obtains $d = n -2$ symbols $\cv_2=\{\vv_{2,1}\rv_1^T,\cdots, \vv_{2,d}\rv_d^T\}$ from $d$ parity nodes. Here, the repair process is such that the interference is aligned by having terms $\zv\av^T$ in each of the symbol in $\cv_2$. Accordingly, we have $\vv_{2,i}=\zv$, which gives $\vv_{2,i}\rv_i^T=\zv\av^T+ \zv\Bm_{i}\bv^T$ and 
$$\cv_2=\{\zv\av^T+\zv\Bm_1\bv^T,\cdots,\zv\av^T+\zv\Bm_d\bv^T\}.$$
Now, the second systematic node chooses the repair vector $\vv_{1,0} = (\omega^0+\cdots+ \omega^{\alpha-1})^{-1} \zv$ and sends
$\vv_{1,0}\cv_2^T= \nu\vv_{1,0}\av^T+\zv\bv^T$ to the first systematic node. Here $\nu$ denotes the sum of $\alpha$ ones over $\F_q$, which depends on the characteristics of the field $\F_q$. Then, the first systematic node solves $d+1$
equations $\{\nu\vv_{1,0}\av^T+\zv\bv^T,\vv_{1,1}\av^T+\zv\bv^T, \cdots, \vv_{1,d}\av^T+\zv\bv^T\}$
in $d+1$ unknowns $\{a_1,\cdots,a_\alpha,\zv\bv^T\}$~\cite{LeScouarnec:Exact12}.
This follows if the matrix $$\Mm=[\nu\vv_{1,0},1; \vv_{1,1}, 1; \cdots; \vv_{1,d}, 1]$$ is invertible,
 when we represent the observed symbols at the first systematic node as $\Mm[a_1,\cdots,a_\alpha,\zv\bv^T]$. (Invertibility of this matrix is stated in~\cite{LeScouarnec:Exact12}. Our analysis shows that this holds if $q>n-1$ is a sufficiently large prime number such that the generator $w$ of $\FF_q$ used above satisfies 
$(\omega^{0} + \cdots + \omega^{\alpha-1})^2 w^{-(\alpha-1)} \notin \{0, \alpha^2\}$.
Details are omitted for brevity.)
The second systematic node can be repaired in a similar manner. It remains to show the secrecy of the file. Here, regardless of eavesdropped node being a systematic or a parity node, given the secure symbols, $\fv^s=\{s_1, \cdots, s_\alpha\}$, the eavesdropper
can obtain $\alpha$ equations in $\alpha$ unknowns $\rv=r_1, \cdots, r_\alpha$. This allows the eavesdropper to solve for $\rv$, and shows that $H(\rv|\fv^s,\ev)=0$, where the eavesdropper observes the content of the eavesdropped node, i.e., $\ev=\sv_{\Ec_1}$. We see that, at the eavesdropped node, the content of the stored data necessarily satisfies $H(\ev)=H(\sv_{\Ec_1})=\alpha$. Then, as the code satisfies both $H(\ev)\leq H(\rv)$ and $H(\rv|\fv^s,\ev)=0$, we obtain from Lemma~\ref{thm:SecrecyLemma} that $I(\fv^s;\ev)=I(s_1, \cdots, s_\alpha;\sv_{\Ec_1})=0$.

\textbf{Case $2$: $\mathcal{M}^{s} = \alpha-1$ when $(\ell_1,\ell_2)=(0,1)$}

We modify the above construction by considering the file given by
${\mathcal{M}}=\{a_1 \triangleq r_1, \cdots, a_\alpha \triangleq r_\alpha,
b_1 \triangleq r_1+s_1, \cdots, b_{\alpha-1} \triangleq r_{\alpha-1}+s_{\alpha-1},
b_\alpha\triangleq r_{\alpha+1}\}$.
The reconstruction and cooperative repair processes are the same as that of the previous case. We show that the secrecy constraint is satisfied here. The content of the eavesdropped node $\sv_{\Ec_2}$ is generated from the downloaded data $\dv_{\Ec_2}$.
Thus, we need to show $I(\fv^s;\ev)=0$ with
$\fv^s=\{s_1,\cdots,s_{\alpha-1}\}$ and $\ev=\dv_{\Ec_2}$.
Without loss generality, we assume that the eavesdropper observes the first systematic node. Considering the repair process described above, we have $\ev=\dv_{\Ec_2}=\{\vv_{1,0}\av^T+\zv\bv^T,\vv_{1,1}\av^T+\zv\bv^T, \cdots, \vv_{1,d}\av^T+\zv\bv^T\}$, from which we obtain that $H(\ev)\leq \alpha+1$. In addition, as the eavesdropper
can solve for $(\av,\zv\bv^T)$, it can solve for $\rv=\{r_1,\cdots,r_{\alpha+1}\}$ from the $\alpha+1$ number of equations in $(\av,\zv\bv^T)$, after canceling out the contribution of secure symbols $\fv^s=\{s_1,\cdots,s_{\alpha-1}\}$ from $\zv\bv^T$.
This shows that $H(\rv|\fv^s,\ev)=0$. Using this, together with
$H(\rv)=\alpha+1$ and Lemma~\ref{thm:SecrecyLemma}, we obtain
that $I(\fv^s;\ev)=I(s_1, \cdots, s_{\alpha-1};\dv_{\Ec_2})=0$.

\begin{figure}[t]
\centering
\includegraphics[width=0.7\columnwidth]{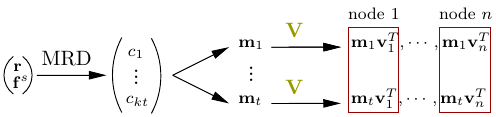}
\caption{MRD codeword is $\cv_{1:kt}$. 
MDS input vector is $\mv_j=(\cv_{(j-1)k+1},\ldots, \cv_{jk})$ for $j=1,\cdots,t$. 
Vandermonde matrix is represented with $\Vm$, columns of which are $\{\vv_1^T,\cdots,\vv_n^T\}$. The placement of the symbols on different nodes is shown on the right.}
\label{fig:MSCRdk}
\end{figure}

As a result of the above construction of secure MSCR codes with $k = t = 2$, we obtain the following result.
\begin{proposition}
The secrecy capacity at the MSCR point 
for a file size of $\Mc=k(d-k+t)\beta$ is given by
$\mathcal{M}^{s}=\alpha\beta$, if $(\ell_1,\ell_2)=(1,0)$ and $k=t=2$;
and by $\mathcal{M}^{s}=(\alpha-1)\beta$, if $(\ell_1,\ell_2)=(0,1)$ and $k=t=2$.
\end{proposition}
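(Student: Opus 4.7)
The plan is to prove this proposition by matching the upper bounds of Proposition~\ref{thm:MSCRbound} with two explicit exact-repair constructions built on the interference alignment template of~\cite{LeScouarnec:Exact12}. Specializing Proposition~\ref{thm:MSCRbound} to $k=t=2$ gives $\alpha=d$, $\beta=\beta'=1$, and $\Mc=2\alpha$. For $(\ell_1,\ell_2)=(1,0)$ the download-leakage term vanishes and the bound collapses to $\Mc^s\leq\alpha$, while for $(\ell_1,\ell_2)=(0,1)$ one invokes $I(\sv_i;\dv_{i,\Ec_2})\geq\beta=1$ on the single eavesdropper-free stage to get $\Mc^s\leq\alpha-1$. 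The task is then to exhibit achievable schemes saturating each bound and verify secrecy via Lemma~\ref{thm:SecrecyLemma}.

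For the storage-only case, I would precode $\alpha$ secret symbols $s_1,\ldots,s_\alpha$ with $\alpha$ fresh uniform keys $r_1,\ldots,r_\alpha$ via a one-time pad, setting $a_j=r_j$ and $b_j=r_j+s_j$, and store $\av$ and $\bv$ at the two systematic nodes while the $i$-th parity node holds $\rv_i=\av+\Bm_i\bv$ for a diagonal $\Bm_i$ whose entries are powers of a primitive root $w$ of $\F_q$ with $q=n-1$. Data reconstruction from any two nodes reduces to $\alpha$ pairs of equations in two unknowns; cooperative repair follows the alignment scheme by projecting with $\vv_{1,i}=\zv\Bm_i^{-1}$ (so the $\bv$-contribution collapses to $\zv\bv$) and forwarding a single aligned symbol across the cooperative link. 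Conditioning on $\uv=(s_1,\ldots,s_\alpha)$ turns the content of any single observed node into an invertible linear system in $\rv$, giving $H(\rv\mid\uv,\ev)=0$, while $H(\ev)=\alpha=H(\rv)$; Lemma~\ref{thm:SecrecyLemma} then yields $I(\uv;\ev)=0$.

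For the storage-and-download case, I would keep the same code geometry but shift one unit of budget from secrets to randomness: use $\alpha+1$ keys $r_1,\ldots,r_{\alpha+1}$ and $\alpha-1$ secrets, with $a_j=r_j$, $b_j=r_j+s_j$ for $j<\alpha$, and $b_\alpha=r_{\alpha+1}$. Reconstruction and repair are unaffected. An eavesdropper in $\Ec_2$ observing the repair of a systematic node sees $d+1=\alpha+1$ downloaded symbols of the form $\wv_{1,i}\av+\zv\bv$, so $H(\ev)\leq\alpha+1$. After subtracting the known $s_j$ contributions, these $\alpha+1$ observations form a linear system in the $\alpha+1$ unknowns $(\av,\zv\bv)$, which in turn are independent $\F_q$-linear combinations of the $\alpha+1$ keys, giving $H(\rv\mid\uv,\ev)=0$; Lemma~\ref{thm:SecrecyLemma} again gives $I(\uv;\ev)=0$.

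The main obstacle is the $(0,1)$ case, where the per-repair download $\alpha+1$ strictly exceeds the per-node storage $\alpha$, so one must precisely absorb the extra bit of eavesdropper observation into the key budget and verify that the resulting $(\alpha+1)\times(\alpha+1)$ coefficient matrix relating observations to $(\av,\zv\bv)$ is non-singular over $\F_q$. This full-rank check is exactly the Vandermonde-type nonsingularity of the $\Bm_i$'s that underlies the correctness of repair in~\cite{LeScouarnec:Exact12}, so the obstacle reduces to careful bookkeeping once the alignment structure is in hand.
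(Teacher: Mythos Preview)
Your proposal is correct and follows essentially the same approach as the paper: specialize Proposition~\ref{thm:MSCRbound} to $k=t=2$ for the two upper bounds, then match them with the interference-alignment construction of~\cite{LeScouarnec:Exact12} precoded by a one-time pad (with $\alpha$ keys for $(\ell_1,\ell_2)=(1,0)$ and $\alpha+1$ keys for $(0,1)$), verifying secrecy via Lemma~\ref{thm:SecrecyLemma} exactly as you describe. Your identification of the only delicate point---the invertibility of the $(\alpha+1)\times(\alpha+1)$ system in $(\av,\zv\bv)$ for the download-eavesdropper case---and its reduction to the repair correctness in~\cite{LeScouarnec:Exact12} also matches the paper's treatment.
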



\subsection{Code construction for secure MSCR when $d=k$}

The above construction is limited to the $k=2$ case.
Here, we provide secure MSCR code when $d=k$, and hence
allowing $k>2$. (Note that as $d\geq k > \ell_1+\ell_2$,
we necessarily have $\ell_1+\ell_2 < d=k$ here.) Here, we apply a two-stage 
encoding, where we utilize an MRD code for pre-coding to achieve secrecy.

Consider ${\mathcal{M}}=k(d-k+t)=kt$, $\beta=\beta'=1$, $\alpha=d-k+t=t$,
$\mathcal{M}^{s}=(k-\ell_1-\ell_2)(t-\ell_2) = kt-(\ell_1+\ell_2)t-\ell_2(k-\ell_1-\ell_2)$, and $n\geq d+t$. (The general case of $\beta>1$ can be obtained by a parallel utilization of independent codes with $\beta=1$.)
We encode the data using the linearized polynomial
$f(g)=\sum\limits_{i=0}^{{\mathcal{M}}-1} u_i g^{q^i}$. 
(This is the Gabidulin construction of MRD codes summarized in Section~\ref{sec:MRD}.)
The coefficients of $f(g)$ is chosen by ${\mathcal{M}}-\mathcal{M}^{s}$ number of random symbols denoted by $\rv$ and $\mathcal{M}^{s}$ data symbols denoted by $\fv^s$.
The function $f(g)$ is evaluated at ${\mathcal{M}}$ points in $\F_{q^m}$, 
$\{g_1,\ldots, g_{\mathcal{M}}\}$, that are linearly independent over $\mathbb{F}_q$. (Here, the data and random symbols
belong to $\FF_{q^m}$ with $m\geq {\mathcal{M}}$.) We denote these evaluations as $c_i=f(g_i)$ for $i=1,\cdots,{\mathcal{M}}=kt$.
We consider the code provided in~\cite{Shum:Cooperative11}
for the secrecy setting here. We place these ${\mathcal{M}}=kt$ symbols
into vectors $\mv_1,\cdots,\mv_t$, each having $k$
symbols. We encode these vectors with a Vandermonde
matrix of size $k\times n$, whose columns are represented
as $\vv_i^T$ for $i=1,\cdots,n$. We store
$\{\mv_1 \vv_i^T,\ldots, \mv_t\vv_i^T\}$ at node $i$. (See Fig.~\ref{fig:MSCRdk}.)

Data collector $\rm{DC}$,
by contacting any $k$ nodes, can obtain $k$
equations for each of $\mv_j$, and solve them
to obtain $c_i$ for $i=1,\cdots,{\mathcal{M}}=kt$. It can then
obtain the secure data symbols by performing interpolation for the underlying linearized polynomial $f(g)$~\cite{Gabidulin:Theory85}. Next, we briefly describe the cooperative node repair process for the codes under consideration. For node repair, consider that node $j_l \in \{j_1, j_2,\ldots, j_t\}$ contacts
$d=k$ live nodes, referred to as $\{ i_1, i_2,\ldots, i_k\}$.
It downloads $\mv_l\vv_{i_r}^T$ from live node
$i_r$ for $r=1,\cdots,k$. Node $j_l$ then obtains $\mv_l$ by solving these $k$ equations.
It stores $\mv_l\vv_{j_l}^T$, and sends $\mv_l\vv_{j_{l'}}^T$ to node $j_{l'}\in \{j_1, j_2,\ldots, j_t\} ,  l' \neq l$,
i.e., the remaining nodes under repair. Each node $j_l \in\{j_1, j_2,\ldots, j_t\}$ repeats this procedure. As a result, node $j_l$ recovers its $\mv_{l'}\vv_{j_l}^T$ for $l' \in[1:t], l'\neq l$ by downloading a symbol from each node under repair.

We show the secrecy constraint has met assuming 
$\ell_2\leq t$ here. (Otherwise, this construction can not achieve
a positive secure file size as an eavesdropper obtains all $\mv_{1:t}$ symbols from download-eavesdropped nodes.)
We note that an eavesdropper obtain $\ell_2 k$ equations from the $d=k$ live nodes by observing repair of $\ell_2$ nodes (the observed symbols reveal $\ell_2$ number of $\mv_j$s), and an additional $\ell_2(t-\ell_2)=\ell_2(\alpha-\ell_2)$ symbols from the remaining nodes under repair. Besides this, the eavesdropper gets $\ell_1\alpha$ number of symbols from the content stored on $\ell_1$ storage-eavesdropped nodes. However, $\ell_1 \ell_2$ of these symbols
are linearly dependent to the ones downloaded by nodes in $\Ec_2$  (as the nodes in $\Ec_2$ recover $\ell_2$ number of $\mv_j$s during node repair). (See Fig.~\ref{fig:MSCRdkEve}.)
Therefore, using the given polynomial
and the secure data of length $\mathcal{M}^{s}$, the eavesdropper
can solve for the random symbols using
these $\ell_2(k+\alpha-\ell_2)+\ell_1(\alpha-\ell_2)=\ell_2(k+t-\ell_2)+\ell_1 (t-\ell_2)
=(k-\ell_1-\ell_2)\ell_2+(\ell_1+\ell_2)t={\mathcal{M}}-\mathcal{M}^{s}$
linearly independent evaluations of the polynomial $f(g)$. This implies that we have $H(\rv|\fv^s,\ev)=0$, where
$\ev$ denotes the observations of the eavesdropper associated with $\Ec_1$ and $\Ec_2$. This construction also satisfies
$H(\ev)=\ell_2k+\ell_2(\alpha-\ell_2)+\ell_1(\alpha-\ell_2)=H(\rv)$
as argued above, and it follows from Lemma~\ref{thm:SecrecyLemma}
that we have $I(\fv^s;\ev)=0$.
Thus, the proposed coding scheme achieves the secure file size of 
$kt-(\ell_1+\ell_2)t-\ell_2(k-\ell_1-\ell_2)$ when $\ell_2\leq t$; consequently, we have the following. 
\begin{proposition}
The secure file size of $\mathcal{M}^{s}
=(k-\ell_1-\ell_2)[t-\ell_2]^+\beta$ is achievable at the
MSCR point for a file size of ${\mathcal{M}}=k(d-k+t)\beta$
when $d=k$.
\end{proposition}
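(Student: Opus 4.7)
The plan is to combine the Gabidulin-based MRD precoding already used in Section~\ref{sec:MBCR} with the product-matrix style MSCR construction of~\cite{Shum:Cooperative11}, and then verify security via the Secrecy Lemma. First I would take $\Mc^s=(k-\ell_1-\ell_2)[t-\ell_2]^+$ secure symbols $\uv$ together with $\Mc-\Mc^s$ uniform random symbols $\rv$ and use them as the coefficients of the linearized polynomial $f(g)=\sum_{i=0}^{\Mc-1}u_i g^{q^i}$, evaluating at $\Mc=kt$ points $g_1,\ldots,g_{\Mc}\in\F_{q^m}$ that are linearly independent over $\F_q$. Packing the evaluations $x_i=f(g_i)$ into $t$ vectors $\mv_1,\ldots,\mv_t$ of length $k$ each, node $i$ stores $\mv_j^T\gv_i$ for $j\in[t]$ with $\gv_i$ the $i$-th column of a $k\times n$ Vandermonde matrix, exactly as in~\cite{Shum:Cooperative11}. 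Because the precoding is $\F_q$-linear, the data reconstruction at any $k$ nodes and the cooperative exact repair (download one symbol from each of $d=k$ live nodes, invert a Vandermonde system for each $\mv_j$, exchange one symbol per pair in the repair group) inherit directly from the underlying MSCR code; after reconstructing $x_1,\ldots,x_{\Mc}$ a data collector interpolates $f$ to recover $\uv$.

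The substantive part of the proof is the security verification through Lemma~\ref{thm:SecrecyLemma}, which requires showing (i) $H(\ev)\leq H(\rv)=\Mc-\Mc^s$ and (ii) $H(\rv\mid\uv,\ev)=0$. For (i), I would carefully count linearly independent eavesdropper observations. Each $\Ec_2$ node obtains $k$ downloads from live nodes, which together reveal the corresponding $\mv_j$ in full; across the $\ell_2$ nodes in $\Ec_2$ this contributes $\ell_2 k$ independent symbols and fully determines $\ell_2$ of the $t$ vectors $\mv_j$. The cross-group repair downloads add a further $\ell_2(t-\ell_2)$ symbols (the $\mv_{j'}^T\gv_j$ for $j'\notin\Ec_2$). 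Each $\Ec_1$ node stores $\alpha=t$ symbols, but $\ell_2$ of them are evaluations of the already-known $\mv_j$ and therefore redundant with $\Ec_2$'s downloads, leaving $\ell_1(t-\ell_2)$ new symbols. The total is $\ell_2(k+t-\ell_2)+\ell_1(t-\ell_2)=(k-\ell_1-\ell_2)\ell_2+(\ell_1+\ell_2)t=\Mc-\Mc^s$, matching $H(\rv)$ exactly.

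For (ii), I would exploit $\F_q$-linearity of the linearized polynomial: subtracting the contribution of the monomials associated with $\uv$ from each eavesdropper observation reduces the problem to inverting a system of $\Mc-\Mc^s$ evaluations of a modified linearized polynomial whose coefficients are precisely $\rv$. Since the evaluation points involved remain $\F_q$-linearly independent and a linearized polynomial is uniquely determined by its evaluations on any set of $\F_q$-linearly independent points, this system is invertible over $\F_{q^m}$, giving $H(\rv\mid\uv,\ev)=0$. Applying Lemma~\ref{thm:SecrecyLemma} then yields $I(\uv;\ev)=0$.

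The main obstacle is the linear-algebraic bookkeeping in step (i): identifying the precise $\ell_1\ell_2$ redundancies between the storage contents of the $\Ec_1$ nodes and the downloads observed at the $\Ec_2$ nodes, and confirming that no further dependencies arise. The equality $H(\ev)=H(\rv)$ is what tightens the achievable rate to exactly $(k-\ell_1-\ell_2)[t-\ell_2]^+$ and explains the necessity of the $[\,\cdot\,]^+$ (the construction collapses once $\ell_2\geq t$ since the $\Ec_2$ nodes then learn all $\mv_j$). Once this counting is in place, the rest of the argument is a routine application of the Secrecy Lemma combined with the standard properties of Gabidulin codes and the Shum MSCR construction.
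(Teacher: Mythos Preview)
Your proposal is correct and follows essentially the same approach as the paper: Gabidulin/MRD precoding over $\F_{q^m}$ composed with Shum's MSCR construction from~\cite{Shum:Cooperative11}, followed by the same counting argument ($\ell_2 k$ symbols from live-node downloads, $\ell_2(t-\ell_2)$ from intra-group exchanges, and $\ell_1(t-\ell_2)$ new symbols from $\Ec_1$ storage after removing the $\ell_1\ell_2$ redundancies) to verify $H(\ev)=H(\rv)$ and $H(\rv\mid\uv,\ev)=0$ and then invoke Lemma~\ref{thm:SecrecyLemma}. The paper likewise notes the restriction $\ell_2\leq t$ (yielding the $[\,\cdot\,]^+$) for exactly the reason you identify.
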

Note that this achieves the secrecy capacity when
$\ell_2\leq 1$ for any $\ell_1$ as can be observed from
the bound given by Proposition~\ref{thm:MSCRbound}.

\begin{figure*}[t]
\centering
\includegraphics[width=0.9\textwidth]{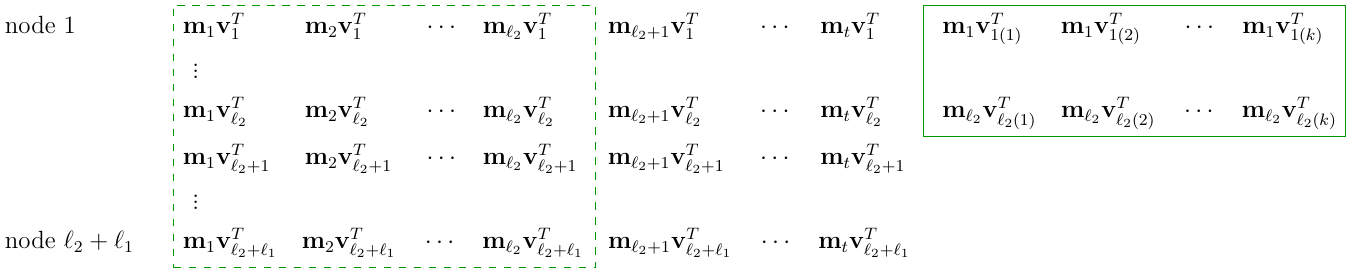}
\caption{The observed symbols at the eavesdropper. Without loss of generality we assume 
first $\ell_2$ nodes belong to $\Ec_2$ and the following $\ell_1$ nodes belong to
$\Ec_1$. We denote the symbols downloaded at $i\in\Ec_2$ from live node $x_{i(j)}$
as $\mv_i\vv_{i(j)}^T$ for $j=[1:d]$, where $i(j)$ denotes the $j${th} contacted live 
node for repair of node $i$. For the nodes in $\Ec_2$, we indicate these downloaded symbols on the right 
hand side of the figure, within the box with solid lines. (On the left hand side, we have the stored content of each node.)
From downloaded symbols to $\Ec_2$ nodes, the eavesdropper observes
$\dv_{\Ec_2}=\{\mv_{i}\vv_{i(j)}^T: i=1,\cdots,\ell_2; j=1,\cdots,k\}$. (Here, $d=k$.)
The symbols in the first $\ell_2$ columns, i.e., $\{\mv_{1}\vv_{1}^T,\ldots,\mv_{1}\vv_{\ell_2+\ell_1}^T,\cdots,\mv_{\ell_2}\vv_{1}^T,\ldots,\mv_{\ell_2}\vv_{\ell_2+\ell_1}^T\}$,
are functions of the symbols in $\dv_{\Ec_2}$ due to the construction (i.e., MDS coding).
These $\ell_2k$ symbols in $\dv_{\Ec_2}$ in addition to the $(\ell_1+\ell_2)(t-\ell_2)$ symbols 
located in the middle, i.e., $\{\mv_{\ell_2+1}\vv_{1}^T,\ldots,\mv_{\ell_2+1}\vv_{\ell_2+\ell_1}^T,\cdots,\mv_{t}\vv_{1}^T,\ldots,\mv_{t}\vv_{\ell_2+\ell_1}^T\}$,
correspond to $\ell_2k+(\ell_1+\ell_2)(t-\ell_2)$ linearly independent evaluations of the linearized polynomial $f(g)$.}
\label{fig:MSCRdkEve}
\end{figure*}


\section{Conclusion}
\label{sec:con}

Distributed storage systems (DSS) store data on multiple nodes. These systems not
only require resilience against node failures, but also have to satisfy security 
constraints and to perform multiple node repairs.
Regenerating codes proposed for DSS address the node failure resilience while efficiently
trading off storage vs. repair bandwidth. In this paper, we considered
secure cooperative regenerating codes for DSS against passive eavesdropping attacks. 
The eavesdropper model analyzed in this paper belongs to the class of passive attack models,
where the eavesdroppers observe the content of the nodes in the system.
Accordingly, we considered an $(\ell_1,\ell_2)$-eavesdropper, where the stored content at
any $\ell_1$ nodes, and the downloaded content at any $\ell_2$ nodes are leaked
to the eavesdropper. With such an eavesdropper model, we studied the security 
for the multiple repair scenario, in particular secure cooperative regenerating codes. 
For the minimum bandwidth cooperative regenerating (MBCR) point, we established a bound on the secrecy 
capacity, and by modifying the existing coding schemes in the literature,
devised new codes achieving the secrecy capacity. For the minimum storage
cooperative regenerating (MSCR) point, on the other hand, we proposed an upper
bound and lower bounds on the secure file size, which match under
special cases. The results show that it is possible to design regenerating
codes that not only efficiently trades storage for repair bandwidth, but are also
resilient against security attacks in a cooperative repair scenario.
Finally, as evident from some of our secrecy-achieving constructions, 
we would like to emphasize the role that the maximum rank distance (MRD) 
codes can take in secrecy problems. In particular, we have utilized the 
Gabidulin construction~\cite{Gabidulin:Theory85} of MRD codes and properties of 
linearized polynomials in obtaining some of the results. Similar properties 
of such codes have been utilized to achieve secrecy in earlier works
\cite{Gabidulin:Ideals91,Gibson:security96,Koetter:Coding07,Silva:rank08},
and they proved their potential again here as an essential component 
for achieving secrecy in DSS. (See also~\cite{Silberstein:Error12} for utilization
of these codes in active eavesdropper settings, and~\cite{Rawat:Optimal12} for constructing locally repairable codes with and without secrecy constraints.)

We list some avenues for further research here. The secrecy
capacity of MSCR codes remains as an open problem, as we have established
the optimal codes under some parameter settings. To attempt solving this problem,
codes for MSCR without security constraints have to be further investigated.
One can also consider cooperative repair in a DSS having locally
repairable structure. As other distributed systems, DSS may exhibit
simultaneous node failures that need to be recovered with local connections.
According to our best knowledge, this setting has not been studied (even
without security constraints). Our ongoing efforts are on the design
of coding schemes for DSS satisfying these properties.


\section*{Acknowledgement}
The authors would like to thank the reviewers for their insightful comments and suggestions.


\appendices

\section{Proof of Lemma~\ref{thm:SecrecyLemma}}
\label{app:SecrecyLemma}
\begin{proof}
The proof follows from the classical techniques given
by~\cite{Wyner:Wire-tap75}, where instead of
$0$-leakage, $\epsilon$-leakage rate is considered.
The application of this technique in DSS is
considered in~\cite{Shah:Information11}, as summarized below.
We have
\begin{eqnarray*}
I(\fv^s;\ev) &=& H(\ev)-H(\ev|\fv^s) \\
&\stackrel{(a)}{\leq}&H(\ev)-H(\ev|\fv^s) + H(\ev|\fv^s,\rv) \\
&\stackrel{(b)}{\leq}&H(\rv)-I(\ev;\rv|\fv^s) \\
&\stackrel{(c)}{=}&H(\rv|\fv^s,\ev) \\
&\stackrel{(d)}{=}&0,
\end{eqnarray*}
where (a) follows by non-negativity of $H(\ev|\fv^s,\rv)$,
(b) is the condition $H(\ev)\leq H(\rv)$,
(c) is due to $H(\rv|\fv^s)=H(\rv)$ as $\rv$ and $\fv^s$ are
independent, (d) is the condition $H(\rv|\fv^s,\ev)=0$.
\begin{remark}
If the eavesdropper has a vanishing probability of error
in decoding $\rv$ given $\ev$ and $\fv^s$, then, by Fano's
inequality, one can write $H(\rv|\fv^s,\ev)\leq |\rv|\epsilon$,
and, by following the above steps, can show the bound
$I(\fv^s;\ev)\leq |\rv|\epsilon$, where $|\rv|$ is the
number of random bits, and $\epsilon$ can be made small
if the probability of error is vanishing.
This shows that the leakage rate
$I(\fv^s;\ev)/|\ev|$ is vanishing.
(See, e.g.,~\cite{Wyner:Wire-tap75}.)
\end{remark}
\end{proof}


\section{Proof or Lemma~\ref{thm:SecureCutLemma}}
\label{app:SecureFileSizeBound}

We summarize the steps given in~\cite{Rawat:Optimal12}.

\begin{eqnarray*}
\Mc^s&=&H(\fv^s) \\
&\stackrel{(a)}{=}& H(\fv^s) - I(\fv^s;\sv_{\Ec_1'},\dv_{\Ec_2'})\\
&=& H(\fv^s|\sv_{\Ec_1'},\dv_{\Ec_2'})\\
&\stackrel{(b)}{=}& H(\fv^s|\sv_{\Ec_1'},\dv_{\Ec_2'}) - H(\fv^s|\sv_\Kc)  \\
&\stackrel{(c)}{=}& H(\fv^s|\sv_{\Ec_1'},\dv_{\Ec_2'}) - H(\fv^s|\sv_{\Ec_1'}, \dv_{\Ec_2'},\sv_{\Kc})  \\
&\leq& I(\fv^s;\sv_{\Kc}|\sv_{\Ec_1'},\dv_{\Ec_2'})\\
&\leq& H(\sv_{\Kc}|\sv_{\Ec_1'},\dv_{\Ec_2'})\\
&=& \sum\limits_{j=1}^k H(\sv_j|\sv_{1},\cdots,\sv_{j-1},\sv_{\Ec_1'},\dv_{\Ec_2'}),
\end{eqnarray*}
where (a) follows by the security constraint, i.e., $0 \leq I(\fv^s;\sv_{\Ec_1'},\dv_{\Ec_2'})\leq I(\fv^s;\sv_{\Ec_1},\dv_{\Ec_2})=0$, (b) is due to the data construction property, i.e., $H(\fv^s|\sv_\Kc)=0$, (c) is due to $0 \leq H( \fv^s|\sv_{\Ec'_1}, \dv_{\Ec'_2}, \sv_{\Kc}) \leq H( \fv^s|\sv_\Kc) = 0$.


\section{NRBW values for MBCR point in DSS}\label{app:CoopTable}

The parameters of Proposition~\ref{thm:MBCRbound} are given in the following tables.
$\ell_1=0$ case corresponds to the systems without security constraints.
$t=1$ case corresponds to non-cooperative case.
Red (green) font highlights cases with greater (respectively, smaller)
cooperative NRBW ($\gamma/\mathcal{M}^{s}$) compared to that for $t=1$.
We observed that the same trend continues for higher $n$ values.

\small
\begin{table}[htbp]
\caption{NRBW for $n=4,5$, $d\geq k$, $d+t=n$.}
\label{tab:Coop}
\centering
\begin{tabular}{|cccccccccc|}\hline
$n$ & $k$ & $l$ & $t$ & $d$ & $\beta/\mathcal{M}^{s}$ & $\beta'/\mathcal{M}^{s}$ & $\gamma/\mathcal{M}^{s}$ & ${\mathcal{M}}$ & $\mathcal{M}^{s}$ \\ \hline
  4 &   2 &   0 &   1 &   3 & 0.2000 & 0.1000 & 0.6000 &  10 &  10\\ \hline

  4 &   2 &   0 &   2 &   2 & 0.2500 & 0.1250 & \textbf{\textcolor{red}{0.6250}} &   8 &   8\\ \hline

  4 &   2 &   1 &   1 &   3 & 0.5000 & 0.2500 & 1.5000 &  10 &   4\\ \hline

  4 &   2 &   1 &   2 &   2 & 0.6667 & 0.3333 & \textbf{\textcolor{red}{1.6667}} &   8 &   3\\ \hline

  4 &   3 &   0 &   1 &   3 & 0.1667 & 0.0833 & 0.5000 &  12 &  12\\ \hline

  4 &   3 &   1 &   1 &   3 & 0.3333 & 0.1667 & 1.0000 &  12 &   6\\ \hline

  4 &   3 &   2 &   1 &   3 & 1.0000 & 0.5000 & 3.0000 &  12 &   2\\ \hline

  5 &   2 &   0 &   1 &   4 & 0.1429 & 0.0714 & 0.5714 &  14 &  14\\ \hline

  5 &   2 &   0 &   2 &   3 & 0.1667 & 0.0833 & \textbf{\textcolor{red}{0.5833}} &  12 &  12\\ \hline

  5 &   2 &   0 &   3 &   2 & 0.2000 & 0.1000 & \textbf{\textcolor{red}{0.6000}} &  10 &  10\\ \hline

  5 &   2 &   1 &   1 &   4 & 0.3333 & 0.1667 & 1.3333 &  14 &   6\\ \hline

  5 &   2 &   1 &   2 &   3 & 0.4000 & 0.2000 & \textbf{\textcolor{red}{1.4000}} &  12 &   5\\ \hline

  5 &   2 &   1 &   3 &   2 & 0.5000 & 0.2500 & \textbf{\textcolor{red}{1.5000}} &  10 &   4\\ \hline

  5 &   3 &   0 &   1 &   4 & 0.1111 & 0.0556 & 0.4444 &  18 &  18\\ \hline

  5 &   3 &   0 &   2 &   3 & 0.1333 & 0.0667 & \textbf{\textcolor{red}{0.4667}} &  15 &  15\\ \hline

  5 &   3 &   1 &   1 &   4 & 0.2000 & 0.1000 & 0.8000 &  18 &  10\\ \hline

  5 &   3 &   1 &   2 &   3 & 0.2500 & 0.1250 & \textbf{\textcolor{red}{0.8750}} &  15 &   8\\ \hline

  5 &   3 &   2 &   1 &   4 & 0.5000 & 0.2500 & 2.0000 &  18 &   4\\ \hline

  5 &   3 &   2 &   2 &   3 & 0.6667 & 0.3333 & \textbf{\textcolor{red}{2.3333}} &  15 &   3\\ \hline

  5 &   4 &   0 &   1 &   4 & 0.1000 & 0.0500 & 0.4000 &  20 &  20\\ \hline

  5 &   4 &   1 &   1 &   4 & 0.1667 & 0.0833 & 0.6667 &  20 &  12\\ \hline

  5 &   4 &   2 &   1 &   4 & 0.3333 & 0.1667 & 1.3333 &  20 &   6\\ \hline

  5 &   4 &   3 &   1 &   4 & 1.0000 & 0.5000 & 4.0000 &  20 &   2\\ \hline

\end{tabular}
\end{table}

\small
\begin{table}[htbp]
\caption{NRBW for $n=4,5$, $d\geq k$, $d+t\leq n$.}
\label{tab:Coop2}
\centering
\begin{tabular}{|cccccccccc|}\hline
$n$ & $k$ & $l$ & $t$ & $d$ & $\beta/\mathcal{M}^{s}$ & $\beta'/\mathcal{M}^{s}$ & $\gamma/\mathcal{M}^{s}$ & ${\mathcal{M}}$ & $\mathcal{M}^{s}$ \\ \hline
  4 &   2 &   0 &   1 &   3 & 0.2000 & 0.1000 & 0.6000 &  10 &  10\\ \hline

  4 &   2 &   0 &   1 &   2 & 0.3333 & 0.1667 & 0.6667 &   6 &   6\\ \hline

  4 &   2 &   0 &   2 &   2 & 0.2500 & 0.1250 & \textbf{\textcolor{green}{0.6250}} &   8 &   8 \\ \hline

  4 &   2 &   1 &   1 &   3 & 0.5000 & 0.2500 & 1.5000 &  10 &   4\\ \hline

  4 &   2 &   1 &   1 &   2 & 1.0000 & 0.5000 & 2.0000 &   6 &   2\\ \hline

  4 &   2 &   1 &   2 &   2 & 0.6667 & 0.3333 & \textbf{\textcolor{green}{1.6667}} &   8 &   3 \\ \hline

  4 &   3 &   0 &   1 &   3 & 0.1667 & 0.0833 & 0.5000 &  12 &  12\\ \hline

  4 &   3 &   1 &   1 &   3 & 0.3333 & 0.1667 & 1.0000 &  12 &   6\\ \hline

  4 &   3 &   2 &   1 &   3 & 1.0000 & 0.5000 & 3.0000 &  12 &   2\\ \hline

  5 &   2 &   0 &   1 &   4 & 0.1429 & 0.0714 & 0.5714 &  14 &  14\\ \hline

  5 &   2 &   0 &   1 &   3 & 0.2000 & 0.1000 & 0.6000 &  10 &  10\\ \hline

  5 &   2 &   0 &   2 &   3 & 0.1667 & 0.0833 & \textbf{\textcolor{green}{0.5833}} &  12 &  12 \\ \hline

  5 &   2 &   0 &   1 &   2 & 0.3333 & 0.1667 & 0.6667 &   6 &   6\\ \hline

  5 &   2 &   0 &   2 &   2 & 0.2500 & 0.1250 & \textbf{\textcolor{green}{0.6250}} &   8 &   8 \\ \hline

  5 &   2 &   0 &   3 &   2 & 0.2000 & 0.1000 & \textbf{\textcolor{green}{0.6000}} &  10 &  10 \\ \hline

  5 &   2 &   1 &   1 &   4 & 0.3333 & 0.1667 & 1.3333 &  14 &   6\\ \hline

  5 &   2 &   1 &   1 &   3 & 0.5000 & 0.2500 & 1.5000 &  10 &   4\\ \hline

  5 &   2 &   1 &   2 &   3 & 0.4000 & 0.2000 & \textbf{\textcolor{green}{1.4000}} &  12 &   5 \\ \hline

  5 &   2 &   1 &   1 &   2 & 1.0000 & 0.5000 & 2.0000 &   6 &   2\\ \hline

  5 &   2 &   1 &   2 &   2 & 0.6667 & 0.3333 & \textbf{\textcolor{green}{1.6667}} &   8 &   3 \\ \hline

  5 &   2 &   1 &   3 &   2 & 0.5000 & 0.2500 & \textbf{\textcolor{green}{1.5000}} &  10 &   4 \\ \hline

  5 &   3 &   0 &   1 &   4 & 0.1111 & 0.0556 & 0.4444 &  18 &  18\\ \hline

  5 &   3 &   0 &   1 &   3 & 0.1667 & 0.0833 & 0.5000 &  12 &  12\\ \hline

  5 &   3 &   0 &   2 &   3 & 0.1333 & 0.0667 & \textbf{\textcolor{green}{0.4667}} &  15 &  15 \\ \hline

  5 &   3 &   1 &   1 &   4 & 0.2000 & 0.1000 & 0.8000 &  18 &  10\\ \hline

  5 &   3 &   1 &   1 &   3 & 0.3333 & 0.1667 & 1.0000 &  12 &   6\\ \hline

  5 &   3 &   1 &   2 &   3 & 0.2500 & 0.1250 & \textbf{\textcolor{green}{0.8750}} &  15 &   8 \\ \hline

  5 &   3 &   2 &   1 &   4 & 0.5000 & 0.2500 & 2.0000 &  18 &   4\\ \hline

  5 &   3 &   2 &   1 &   3 & 1.0000 & 0.5000 & 3.0000 &  12 &   2\\ \hline

  5 &   3 &   2 &   2 &   3 & 0.6667 & 0.3333 & \textbf{\textcolor{green}{2.3333}} &  15 &   3 \\ \hline

  5 &   4 &   0 &   1 &   4 & 0.1000 & 0.0500 & 0.4000 &  20 &  20\\ \hline

  5 &   4 &   1 &   1 &   4 & 0.1667 & 0.0833 & 0.6667 &  20 &  12\\ \hline

  5 &   4 &   2 &   1 &   4 & 0.3333 & 0.1667 & 1.3333 &  20 &   6\\ \hline

  5 &   4 &   3 &   1 &   4 & 1.0000 & 0.5000 & 4.0000 &  20 &   2\\ \hline

\end{tabular}
\end{table}

\clearpage




\end{document}